%% file: main.tex
\documentclass[conference]{IEEEtran}

\IEEEoverridecommandlockouts
\def\BibTeX{{\rm B\kern-.05em{\sc i\kern-.025em b}\kern-.08em
    T\kern-.1667em\lower.7ex\hbox{E}\kern-.125emX}}

\input{preamble}

\begin{document}

\title{
UC, Categorically: Rigorous Diagrammatic Proofs
}
\author{
Pooya~Farshim\IEEEauthorrefmark{1}, 
Martti~Karvonen\IEEEauthorrefmark{2}\IEEEauthorrefmark{3},
Andre~Knispel\IEEEauthorrefmark{4}, 
Markulf~Kohlweiss\IEEEauthorrefmark{5}\IEEEauthorrefmark{6}, 
Philip~Wadler\IEEEauthorrefmark{5}\\
Input Output,
\IEEEauthorrefmark{1}Switzerland,
\IEEEauthorrefmark{4}Germany,
\IEEEauthorrefmark{5}UK \\
\IEEEauthorrefmark{2}University College London, UK,
\IEEEauthorrefmark{3}University of Bath, UK
\IEEEauthorrefmark{6}University of Edinburgh, UK
\thanks{M.~Karvonen was supported by the Engineering and Physical Sciences Research Council fellowship EP/V040944/1 Resources in Computation. M.~Kohlweiss was supported by Input Output through their funding of the Edinburgh ZK-Lab.}
}
\maketitle
\input{0-abs}
\input{1-intro}
\input{2-cats}
\input{3-uc-categorical}
\input{4-uc-concrete}

\input{5-conc}

\bibliographystyle{IEEEtran}
\bibliography{refs}

\appendices
\input{app-A}

\end{document}

%% file: preamble.tex
\usepackage[english]{babel}

\usepackage{amsthm}
\usepackage{amsmath}
\usepackage{amssymb}
\usepackage{cite}
\usepackage{color}
\usepackage{comment}
\usepackage[colorlinks=true, allcolors=blue]{hyperref}
\usepackage{enumitem}
\setlist[itemize]{leftmargin=*} 
\setlist[enumerate]{leftmargin=*} 
\usepackage{string-diagrams} 
\usepackage{tikz}
\usepackage{url}
\usepackage{xspace}

\usepackage{xcolor}
\usepackage[normalem]{ulem}

\numberwithin{equation}{section}
\theoremstyle{plain}
\newtheorem{theorem}[equation]{Theorem}
\newtheorem{claim}[equation]{Claim}
\newtheorem{lemma}[equation]{Lemma}
\newtheorem{proposition}[equation]{Proposition}
\newtheorem{corollary}[equation]{Corollary}
\theoremstyle{definition}
\newtheorem{definition}[equation]{Definition}

\newtheorem{remark}[equation]{Remark}

\usetikzlibrary{matrix}
\usetikzlibrary{shapes.geometric} 
\newenvironment{pic}[1][]
{\begin{aligned}\begin{tikzpicture}[font=\tiny,#1]}
{\end{tikzpicture}\end{aligned}}

\tikzset{
state/.style={semicircle,shape border rotate=90,draw},
}

\tikzset{
costate/.style={semicircle,shape border rotate=270,draw},
}

\usetikzlibrary{fit} 
\usetikzlibrary{calc}

\newcommand{\ie}{\text{i.e.,}\xspace}
\newcommand{\eg}{\text{e.g.,}\xspace}

\newcommand{\NN}{\mathbb{N}}
\newcommand{\tuple}[1]{\mathopen{\langle}#1\mathclose{\rangle}}
\newcommand{\fst}{\mathsf{fst}}
\newcommand{\snd}{\mathsf{snd}}

\newcommand{\cat}[1]{\mathbf{#1}}

\newcommand{\id}[1][]{\mathrm{id}_{#1}}
\newcommand{\CC}{\cat{C}}

\newcommand{\DD}{\cat{D}}

\newcommand{\OO}{\Sigma} 
\newcommand{\MM}{\Gamma} 
\newcommand{\CB}{\cat{C_{bd}}} 
\newcommand{\DR}{\cat{D_{real}}} 
\newcommand{\DB}{\cat{D_{bd}}} 
\newcommand{\msg}{\mathsf{msg}}
\newcommand{\ucsro}{\mathsf{subroutine\textsf{-}output}}
\newcommand{\uci}{\mathsf{input}}
\newcommand{\ucbd}{\mathsf{backdoor}}
\newcommand{\tp}{\mathsf{tp}}
\newcommand{\E}{\mathcal{E}} 
\newcommand{\exec}{\mathsf{EXEC}}
\newcommand{\ucexec}{\mathsf{EXEC}^{\mathsf{UC}}}
\newcommand{\ucid}{\mathsf{ID}}
\newcommand{\uccom}{\mathsf{C}}
\newcommand{\conn}{\mathcal{C}}
\newcommand{\names}{\mathcal{N}} 
\newcommand{\machines}{M} 
\newcommand{\mux}{\mathsf{mux}}
\newcommand{\demux}{\mathsf{demux}}

%% file: 0-abs.tex
\begin{abstract}
Category theory is a mathematical theory of composition, widely used in logic, computing, and physics. Here we apply it to give a theory of secure composition. In particular, we provide a categorical treatment of Canetti's Universal Composability (UC) framework for systems with a static number of parties and sessions, often termed UC for static systems, yielding four benefits. 

First, we present our results graphically yet retain rigor by applying a standard categorical technique known as \emph{string diagrams}. In particular, our formulation of the composition theorem can be graphically verified with a short sequence of diagrams, while remaining translatable to equations and amenable to formal verification. 

Second, categories let us generalize so that our results extend beyond interactive Turing machines to other forms of computation, such as quantum computation or domain-specific languages. 

Third, categories help us drop some unnecessary restrictions of UC (e.g., our adversary can be a computational network rather than a single Turing machine); we prove equivalence between our variant and the usual UC, showing no expressiveness is lost. 

Finally, the categorical perspective leads us to identify and correct some minor technical oversights in the standard formulation of simple UC.

\end{abstract}

\begin{IEEEkeywords}
    Category Theory, Universal Composition, Symmetric Monoidal Category, String Diagram.  
\end{IEEEkeywords}

%% file: 1-intro.tex
\section{Introduction}
\label{sec:intro}

While highly influential, Universal Composability (UC)~\cite{Can01,Can20} has divided, rather than unified, industrial and academic cryptographers, alongside game-based and simulation-based provable-security experts. Beyond UC, researchers in simulation-based composable security are further divided into camps such as Constructive Cryptography~\cite{Mau11} and IITM~\cite{DBLP:conf/csfw/Kusters06}. This fragmentation is common and expected in relatively young scientific disciplines. What can be learned from unification in other areas, such as mathematics?

Category Theory is the mathematics of composition. It serves as a unifying language across mathematics and is widely used in logic, computing, and physics, not least to reveal connections between the three~\cite{baez_physics_2011}.
It is perhaps surprising that despite its unifying successes in mathematics and computer science, this mathematics of composition has not yet been comprehensively applied to secure composition.

This paper expresses a minor variant of UC for static systems~\cite[Section 2]{Can20} (a.k.a.\ simple UC, henceforward simply ``UC'') and its composition theorem in categorical terms. While this restriction is made for technical simplicity, simple UC is already fairly expressive. For example, it can model multi-party computation with an a priori known set of participants~\cite{DBLP:conf/crypto/CanettiCL15} or any protocol using a static number of idealized building blocks. 
Indeed, EasyUC~\cite{CSV19:EasyUC} works essentially in this setting.
This restriction also simplifies the treatment of global functionalities~\cite{BCH+:globalwithUC}, though they are not the focus of this work.

Our approach offers several benefits:

\emph{First}, categories let us use \emph{string diagrams}, an intuitive yet rigorous diagrammatic reasoning method developed by category theorists. Consequently, our proofs of the Composition Theorems~\ref{thm:securemapsareanSMC} and~\ref{thm:categoricalUCcomposition} are easier to follow than textual presentations, yet amenable to equational translation and computer formalization. 

\emph{Second}, categories are highly useful for generalization. To quote Mac Lane~\cite[IX.6]{maclane:categories}, a co-founder of category theory,  
\begin{quote}
    The point of these observations is not the reduction of the familiar to the unfamiliar [...] but the extension of the familiar to cover many more cases.
\end{quote}

\noindent Indeed, our proof clearly holds for models of computation beyond UC's networks of interactive Turing machines (ITMs). Specifically, it provides a template for developing UC-like theories that replace ITMs with, \eg quantum systems~\cite{Unr10} or code in a suitable domain-specific language (DSL)~\cite{
CSV19:EasyUC,
DBLP:journals/pacmpl/GancherSFSM23,
LHM19}.
Crucially, one does not need to reprove the composition theorem for these variants. To invoke the general result, one simply verifies that their network model of ``interactive computational systems'' admits a computational indistinguishability notion satisfying Definition~\ref{def:indistinguishability} (which typically relies on a model of network execution).

\emph{Third}, categories tend to ``point you in the right direction'':
\begin{itemize}
    \item There is no need for machines to have names (identities) in the context of UC: instead, it is more convenient to name the connections between machines.
    \item The environment and adversary are not restricted to a single machine; they can consist of multiple machines, just like protocols. As a result, absorbing a machine into the environment or adversary is immediate, avoiding the extra step of replacing the network with a single equivalent machine. This also sidesteps a restriction Canetti notes for single-machine environments (see Remark~\ref{rem:environmentasnetwork}).
\end{itemize}
Despite these changes, we show in Section~\ref{ssec:translatingtoUC} that our version translates directly to UC (and vice versa), letting us deduce \cite[Theorem~3]{Can20} from our results. 

\emph{Finally}, we identify and fix some technical oversights (in Section~\ref{ssec:translatingtoUC}) in the setup of the UC composition theorem. These fixes are guided by the categorical principle that sequential composition requires matching types (\ie the codomain of one matches the domain of the other), and some of the corrected conditions appear in a recent tutorial by Canetti~\cite{Can25:tutorial}.

We envision further benefits once the program has been carried out in greater detail: a DSL-based variant of UC, formalizing the widespread practice of expressing computational behavior as (pseudo)code; similar analyses of other composability frameworks, guided by a categorical axiomatization that guarantees composition ``for free''; and, ultimately, rigorous security-preserving translations between frameworks, akin to functorial versions of~\cite{RKC:embeddingUCinIITM}. We expand on these in Section~\ref{sec:conclusion}.

\subsection*{Related works} This paper advances an ongoing program of applying category theory to composable security frameworks~\cite{BK22,BK23}. However, prior literature left open the task of formally connecting these abstract treatments to established frameworks. Here, we close this gap for the most prominent paradigm, namely UC. Rather than pursuing maximal generality, our objective is to provide an abstract template for deriving ``UC-like'' theories with significantly greater tractability than the original framework allows. Consequently, we establish our main results via rigorous diagrammatic reasoning, requiring fewer categorical prerequisites. The theory developed here is not an instance of the general theory in~\cite{BK23}, nor is that theory an instance of ours. In Appendix~\ref{appendix:UC-composition} we outline a more abstract account of our theory, bringing it closer to~\cite{BK23} in spirit but still differing technically.

The use of informal diagrams as a heuristic device is common in cryptography (\eg \cite{PR08,joyofcryptography,BDF+18,BO23SSP}). Recasting them as string diagrams promotes them to a rigorous tool without losing their intuitive nature. String diagrams have already been used in quantum cryptography when analyzing particular protocols (see,  \eg~\cite{breiner:graphicaldicrypto,coecke:graphicalqkd,breiner:selftesting,hillebrand:superdense,stay:crypto,colissonetal:graphstate}), whereas here we use them in the foundations of composable security. Facilitating and popularizing string diagrams in cryptography is one of our long-term goals. 

Recent work~\cite{PKWC24:UCisRC,KPC24:robustcompilationandUC} relates UC to Robust Compilation (RC), showing that UC emulation corresponds to robust hyperproperty-preserving compilation: roughly, compiled programs retain the security properties of their source counterparts no matter what malicious code they are linked with. While RC accommodates distinct source and target languages, instantiating it to UC requires the two to coincide; we instead generalize UC to arbitrary models of computation from the outset, proving the composition theorems once at that level of generality. Moreover, whereas their diagrams serve as informal illustration, our string diagrams are rigorous proof objects. We conjecture that RC compilers correspond to security-preserving symmetric monoidal functors (cf.~\cite[Theorem~4.10]{BK23}).

In Section~\ref{sec:UCcategorical} we make extensive use of a construction $\CB$ of a category with backdoors built from a category $\CC$. An essentially similar construction appears in~\cite{VDS:leakage} (there phrased as a quotient of the $\mathrm{coPara}$ construction), with the adversarial interface read as a leak.

\paragraph*{Structure of the paper} Section~\ref{sec:introtoCT} briefly introduces categories and string diagrams. Section~\ref{sec:UCcategorical} presents our main novelties, expressing UC categorically. The main results are Theorem~\ref{thm:securemapsareanSMC}, which guarantees that security is closed under sequential and parallel composition, and Theorem~\ref{thm:categoricalUCcomposition}, which gives our categorical version of the UC composition theorem. Section~\ref{sec:UCconcrete} instantiates the theory with a model based on networks of ITMs, and translates between this and UC~\cite[Section~2]{Can20}. Section~\ref{sec:conclusion} concludes with further questions.

%% file: 2-cats.tex
\section{A Crash Course on Category Theory}\label{sec:introtoCT}

This section introduces sufficient background in Category Theory for our purposes. String diagrams are surveyed in~\cite{Sel10,PZ:stringdiagrams}. A working cryptographer might find texts such as~\cite{coecke2010categories,FS19,heunenvicary:categories} easier to learn from, as they start with some applications in mind and introduce string diagrams concurrently with the material. General references for category theory include~\cite{awodey:categorytheory,leinster:basicCT}.

\subsection{Categories}\label{ssec:cats}

In mathematics and computing we often study objects of a given kind together with maps between those objects, such as:
\begin{itemize}
    \item sets and functions between sets;
    \item groups and group homomorphisms;
    \item finite-dimensional real vector spaces (with a chosen basis) where maps from a space of dimension $n$ to a space of dimension $m$ are given by $m \times n$ matrices;
    \item types in some type system, where maps from type $S$ to type $T$ consist of a well-typed term $t$ of type $T$ with a free variable $x$ of type $S$ (in shorthand, $x : S \vdash t : T$).
\end{itemize}

Category theory provides a systematic language to capture this kind of situation. A \emph{category} $\CC$ consists of \emph{objects} and \emph{maps} between them. We denote objects $A,B,C,\dots$ and maps, also called \emph{morphisms}, by $f,g,h,\dots$. Every morphism has a \emph{source} (also called the \emph{domain}) and a \emph{target} (also called the \emph{codomain}), and we write $f \colon A\to B$ to mean that $f$ is a morphism (in $\CC$) from $A$ to $B$. 

A nice feature of categories is that they admit a pictorial syntax, known as \emph{string diagrams}, that is both intuitive and rigorous. We introduce string diagrams in tandem with our discussion of categories. In a string diagram, a morphism $f \colon A\to B$ is depicted by 
\[\begin{pic}
\node[box=0/1/0/1] (f) at (0,0) {f};
\draw (f.east) to  ++(.6, 0) node[above] {$B$};
\draw (f.west) to  ++(-.6, 0) node[above] {$A$};
\end{pic}\]
We will often omit the labels for the wires when there is no risk of confusion.

In a category, morphisms can be composed whenever it makes sense to do so. For our four examples above:
\begin{itemize}
    \item In the category $\cat{Set}$ of sets and functions, if $f : A \to B$ and $g : B \to C$, their composition is given by $g \circ f : A \to C$.
    \item In the category $\cat{Grp}$ of  groups and group homomorphisms composition is defined similarly.
    \item In the category $\cat{Mat}$ of vector spaces and matrices, composition corresponds to matrix multiplication.
    \item In the category $\cat{Trm}$ of types and terms, the composition of $x \colon S \vdash t \colon T$ with $y \colon T \vdash s \colon U$
    is given by substituting $t$ for $y$ in $s$, giving $x \colon S \vdash s[t/y] \colon U$. 
\end{itemize}

In general, in a category $\CC$, there has to be a sequential composition operation $\circ$, which, when given $f \colon A\to B$ and $g \colon B\to C$, returns $g\circ f \colon A\to C$, which in string diagrams is expressed by
\[\begin{pic}
\node[box=0/1/0/1] (a) at (0,0) {g \circ f};
\draw (a.east) to ++(.6, 0) node[above] {$C$};
\draw (a.west) to ++(-.6, 0) node[above] {$A$};
\end{pic}
\enspace := \enspace
\begin{pic}
\node[box=0/1/0/1] (f) at (0,0) {f};
\node[box=0/1/0/1] (g) at (1.5,0) {g};
\draw (f.east) to node[above] {$B$} (g.west);
\draw (f.west) to ++(-.6, 0) node[above] {$A$};
\draw (g.east) to ++(.6, 0) node[above] {$C$}; 
\end{pic}\]
To match the diagrammatic order, the notation $f;g$ is sometimes used instead of $g\circ f$. 

Moreover, for every object $A$ there has to be an identity morphism $\id[A] \colon A\to A$ on it, drawn as 
\[\begin{pic}
 \draw (0,0) node[left] {$A$} to (1,0) node[right] {$A$};
 \end{pic}\]
What makes $\id[A]$ and $\id[B]$ earn their names is that they are required to satisfy $f\circ \id[A]=f=\id[B]\circ f$ for any $f \colon A\to B$. One benefit of string diagrams is how they make the required axioms pictorially intuitive, so this would be expressed by 
\[\begin{pic}
    \node[box=0/1/0/1] (f) at (0,0) {f};
    \draw (f.east) to  ++(.5, 0) node[above] {$B$};
    \draw (f.west) to  ++(-.3, 0) node[above] {$A$} to  ++(-.7, 0) node[above] {$A$} ;
    \end{pic}
     =
    \begin{pic}
    \node[box=0/1/0/1] (f) at (0,0) {f};
    \draw (f.east) to  ++(.5, 0) node[above] {$B$};
    \draw (f.west) to  ++(-.5, 0) node[above] {$A$};
    \end{pic}
    =
    \begin{pic}
    \node[box=0/1/0/1] (f) at (0,0) {f};
    \draw (f.east) to  ++(.3, 0) node[above] {$B$} to  ++(.7, 0) node[above] {$B$};
    \draw (f.west) to  ++(-.5, 0) node[above] {$A$};
    \end{pic}
\]
More informally, this amounts to saying that the length of a wire carries no information. 

Moreover, we require composition to be associative, so that whenever we have three morphisms $f \colon A\to B,g \colon B\to C$ and $h \colon C\to D$ that could be composed in sequence, we do not need to specify which pairs we compose first, i.e., $h\circ (g\circ f)=(h\circ g)\circ f$. This means that we can omit the brackets and simply write $h\circ g\circ f$, which justifies the drawing 
\[\begin{pic}
\node[box=0/1/0/1] (f) at (0,0) {f};
\node[box=0/1/0/1] (g) at (1.5,0) {g};
\node[box=0/1/0/1] (h) at (3,0) {h};
\draw (f.east) to  (g.west);
\draw (f.west) to ++(-.6, 0) node[above] {$A$};
\draw (g.east) to (h.west);
\draw (h.east) to ++(.6, 0) node[above] {$D$};
\end{pic}\]
This concludes the definition of a category: a class of objects and morphisms between them where morphisms compose, there is an identity morphism for each object, the identity morphisms are identities under composition, and composition is associative. 

The structure of a category is already sufficiently rich to express many properties of interest. For example, one can define a morphism  $f \colon A\to B$  to be an \emph{isomorphism} if there exists a morphism $g \colon B\to A$ such that $g\circ f=\id[A]$ and $f\circ g=\id[B]$. This general notion specializes to the correct notion of ``sameness'' in each context, capturing, e.g., bijections between sets and isomorphisms of groups, vector spaces, and types.

While many examples of interest arise from sets-with-further-structure and structure-preserving maps, the definition of a category does not require the morphisms to consist of some kind of functions. To see that this is not always the right thinking, it is useful to consider a partially ordered set (poset) consisting of a set $P$ equipped with a relation $\leq$ that is reflexive, transitive, and antisymmetric. Any poset $\tuple{P,\leq}$ induces a category whose objects are given by elements of $P$, and whose morphisms are defined by setting there to be exactly one morphism $x\to y$ whenever $x\leq y$ and no morphisms $x\to y$ otherwise, with reflexivity and transitivity inducing identities and composition. In fact, we will see another example of a category where the intuition of morphisms as functions is not correct: in Section~\ref{ssec:baseforUC} we will encounter a category whose morphisms are certain networks of interactive Turing machines. 

\subsection{Monoidal and symmetric monoidal categories}\label{ssec:monoidalcats}
    In many categories of interest, there is also a way of composing objects and morphisms \emph{in parallel} and not just sequentially. For example,
\begin{itemize}
    \item Recall that if $A$ and $B$ are sets then their cartesian product $A \times B = \{\tuple{x,y} | x \in A,\, y \in B\}$ is the set of all pairs of elements with the first element in $A$ and the second in $B$. Given two functions $f \colon A\to B$ and $g \colon C\to D$, there is a function $f\times g \colon A\times C\to B\times D$ which sends $\tuple{x,y}\in A\times C$ to $\tuple{f(x),g(y)}\in B\times D$. 

    \item Given two groups $G$ and $H$, the product $G\times H$ is obtained by taking the cartesian product of the underlying sets, and performing multiplication pointwise. With this definition, the product $f\times g$ of group homomorphisms $f$ and $g$ is again a group homomorphism of product groups. 

    \item Given two vector spaces $A$ and $B$ with bases $a_1,\dots, a_m$ and $b_1,\dots ,b_n$, their tensor product $A \otimes B$ is an $mn$-dimensional vector space that has a basis built out of pairs $(a_i,b_j)$. Now, given an $m\times n$-matrix $f$ and a $p\times q$-matrix $g$, their \emph{Kronecker product} $f \otimes g$ is the $mp\times nq$-matrix essentially obtained by replacing the $(i,j)$th entry $f_{i,j}$ of $f$ with the matrix $f_{i,j} g$. 
    
    \item If a type system has product types, one can use terms $x : S \vdash t : T$ and $y : U \vdash v : V$ to obtain a term $z : S\times U \vdash \tuple{t[\fst(z)/x], v[\snd(z)/y]} : T\times V$, where $\tuple{t,v}$ builds a pair and $\fst$ and $\snd$ extract the first and second components of a pair. 
\end{itemize}
Monoidal categories capture this kind of situation. Roughly speaking, a monoidal category is a category equipped with a parallel composition operation that is associative and has a unit.
The parallel composition operation is written using the tensor product symbol $ \otimes$.  Given two morphisms $f \colon A\to B$ and $g \colon C\to D$ in $\CC$, their parallel composite is a morphism $f \otimes g \colon A \otimes C\to B \otimes D$. Diagrammatically this is denoted by juxtaposition:
\[\begin{pic}
\node[box=0/2/0/2] (a) at (0,0) {f \otimes g};
\draw (a.east.1) to  ++(.6, 0) node[right] {$B$};
\draw (a.west.1) to  ++(-.6, 0) node[left] {$A$};
\draw (a.east.2) to ++(.6, 0) node[right] {$D$};
\draw (a.west.2) to  ++(-.6, 0) node[left] {$C$};
\end{pic}
\enspace := \enspace 
\begin{pic}
\node[box=0/1/0/1, minimum height=5mm] (f) at (0,.65) {f};
\node[box=0/1/0/1, minimum height=5mm] (g) at (0,0) {g};
\draw (f.east) to ++(.6, 0) node[right] {$B$};
\draw (f.west) to  ++(-.6, 0) node[left] {$A$};
\draw (g.west) to  ++(-.6, 0) node[left] {$C$};
\draw (g.east) to  ++(.6, 0) node[right] {$D$};
\end{pic}\]

This parallel composition operation must satisfy $\id[A \otimes B]=\id[A] \otimes \id[B]$ which pictorially amounts to  
\[\begin{pic}
    \draw (0,0) node[above] {$A \otimes B$} to (1.5,0) node[above] {$A \otimes B$};
  \end{pic}
  \enspace =\enspace 
  \begin{pic}
      \draw (0,0) node[left] {$A$} to (1.5,0) node[right] {$A$};
      \draw (0,-.5) node[left] {$B$} to (1.5,-.5) node[right] {$B$};
  \end{pic}
\] 
This lets us replace a wire of type $A \otimes B$ with two wires (one for $A$ and one for $B$) whenever it is convenient to do so. Morphisms between products are not required to be products of morphisms. For example, a function $(x,y)\mapsto (f(x,y),g(x,y))$ is a product only if $f$ depends only on $x$ and $g$ only on $y$. 

An arbitrary box can have multiple input and output wires. Diagrammatically, a morphism  \[f \colon A_1 \otimes \dots \otimes A_n\to B_1 \otimes \dots \otimes B_m\] is drawn as 
\[\begin{pic}
\node[box=0/3/0/3] (a) at (0,0) {f};
\draw (a.east.1) to  ++(.6, 0) node[right] {$B_1$};
\draw (a.east.3) to ++(.6, 0) node[right] {$B_m$};
\draw (a.west.1) to  ++(-.6, 0) node[left] {$A_1$};
\draw (a.west.3) to  ++(-.6, 0) node[left] {$A_n$};
\node[left] at (a.west.2) {\dots};
\node[right] at (a.east.2) {\dots};
\end{pic}\]

We also require the \emph{interchange law}, which states that 
\begin{equation}\label{eq:interchange}
(h \otimes k)\circ (f \otimes g)=(h\circ f) \otimes (k\circ g)
\end{equation}
for all $f \colon A\to B$, $g \colon C\to D$, $h \colon B\to E$, and $k \colon D\to F$. The interchange law guarantees that the string diagram 
\[\begin{pic}
\node[box=0/1/0/1, minimum height=5mm] (f) at (0,.65) {f};
\node[box=0/1/0/1, minimum height=5mm] (g) at (0,0) {g};
\node[box=0/1/0/1, minimum height=5mm] (h) at (1.5,.65) {h};
\node[box=0/1/0/1, minimum height=5mm] (k) at (1.5,0) {k};
\draw (h.east) to  ++(.6, 0) node[above] {$E$};
\draw (f.west) to  ++(-.6, 0) node[above] {$A$};
\draw (f.east) to node[above] {$B$} (h.west);
\draw (g.west) to  ++(-.6, 0) node[below] {$C$};
\draw (g.east) to node[below] {$D$} (k.west);
\draw (k.east) to  ++(.6, 0) node[below] {$F$};
\end{pic}\] 
is unambiguous. 

In a \emph{strict} monoidal category, parallel composition is \emph{unital}. This means that there is a special object $I$, called \emph{the tensor unit}, satisfying
\[
    A \otimes I = A = I \otimes A \quad\text{and}\quad \id[I] \otimes f = f = f \otimes \id[I]
\]
on objects and morphisms, respectively. Further, parallel composition is \emph{associative}, meaning
\[
    A \otimes (B \otimes C) = (A \otimes B) \otimes C  \quad\text{and}\quad f \otimes (g \otimes h)=(f \otimes g) \otimes h~.
\] This results in the notion of a \emph{strict monoidal category}.

Unfortunately, strict monoidal categories are too strict: they rule out most examples of interest.
For instance, the cartesian product of sets is not strictly associative. Rather, it is associative up to the isomorphism $((a,b),c)\mapsto (a,(b,c))$. Loosely speaking, a \emph{monoidal category} is a category with parallel composition $ \otimes$ that is associative and unital up to isomorphism. More formally, a monoidal category is a category equipped with a parallel composition operation and associativity and unitality isomorphisms which must satisfy some equations. Technically, these should be natural isomorphisms satisfying the triangle and pentagon identities. We won't delve further into these equations; see~\cite[Section 7.8]{awodey:categorytheory} for definitions. 

It is a theorem that every monoidal category is monoidally equivalent to a strict monoidal category, which roughly speaking means that for any monoidal category there is a strict monoidal category and suitable mappings between them that translate back and forth. We neither formalize nor prove this theorem here, but use this result implicitly to pretend that all monoidal categories we work with are strict, simplifying notation.

The tensor unit $I$ is denoted by the empty diagram (as juxtaposing any diagram with the empty one results in the same diagram).  Thus, a morphism $r \colon I\to  A$ would be denoted by a box with no inputs
\[\begin{pic}
\node[box=0/0/0/1] (r) at (0,0) {r};
\draw (r.east) to  ++(.6, 0) node[above] {$A$};
\end{pic}\]

Borrowing terminology from quantum circuits~\cite{AC04,heunenvicary:categories}, morphisms from $I$ are usually called \emph{states}, and we will draw states $r \colon I\to A$ and $r \colon I\to A_1 \otimes \cdots \otimes A_n$ respectively as 
   \[ \begin{pic}
    \node[state] (r) at (0,0){$r$};
    \draw (r.east) to ++(.6,0) node[above] {$A$};
  \end{pic}\quad \text{and } \quad \begin{pic}
    \node[state,minimum width=1cm] (r) at (0,0){$r$};
    \draw (r.north east) to ++(.6,0) node[right] {$A_1$};
    \node[right] at (r.east) {\dots};
    \draw (r.south east) to ++(.6,0) node[right] {$A_n$};
  \end{pic}\]
In $\cat{Set}$ states correspond to elements of a set, in $\cat{Mat}$ to vectors, in $\cat{Grp}$ to neutral elements, and in $\cat{Trm}$ to closed terms (i.e., ones with no free variables) of a given type. 

Now, a \emph{symmetric} monoidal category (SMC) is, roughly speaking, a monoidal category where $ \otimes$ is suitably commutative. Slightly more formally, an SMC comes with specified \emph{swap} or \emph{symmetry} isomorphisms $\sigma_{A,B} \colon A \otimes B \to B \otimes A$ for all objects $A, B$, satisfying some further equations. These are more easily explained in string diagrams. To start, we draw  $\sigma_{A,B} \colon A \otimes B \to B \otimes A$ as
\[\begin{pic}[yscale=.8]
\node (botl) at (0, 0) {};
\node (topl) at (0, .8) {};
\draw (topl) to node[above] {$A$} ++(.6,0) to[in=180,out=0] ++(1.2,-.8) to node[below] {$A$} ++(.6,0);
\draw (botl) to node[below] {$B$} ++(.6,0) to[in=180,out=0]  ++(1.2,.8) to node[above] {$B$} ++(.6,0);
\end{pic}\quad  \text{instead of}\quad \begin{pic}
\node[box=0/2/0/2, minimum height=9mm] (a) at (0,0) {\sigma_{A,B}};
\draw (a.west.1) to ++(-.6, 0) node[above] {$A$};
\draw (a.west.2) to ++(-.6, 0) node[below] {$B$};
\draw (a.east.1) to ++(.6, 0) node[above] {$B$};
\draw (a.east.2) to ++(.6, 0) node[below] {$A$};
\end{pic}\] 
Given this notation, we then require
\[\begin{pic}
\node[box=0/1/0/1,minimum height=6mm] (g) at (0, .8) {g};
\node[box=0/1/0/1,minimum height=6mm] (f) at (0, 0) {f};
\draw (g.east.1) to  ++(.8,0) node[above] {$D$};
\draw (f.east.1) to  ++(.8,0) node[above] {$C$};
\draw (g.west.1) to[in=0,out=180] ++(-1.2,-.8) to ++(-.75,0) node[above] {$B$};
\draw (f.west.1) to[in=0,out=180]  ++(-1.2,.8) to  ++(-.75,0) node[above] {$A$};
\end{pic}=
\begin{pic}
\node[box=0/1/0/1,minimum height=6mm] (f) at (0, .8) {f};
\node[box=0/1/0/1,minimum height=6mm] (g) at (0, 0) {g};
\draw (f.west.1) to ++(-.8,0) node[above] {$A$};
\draw (g.west.1) to ++(-.8,0) node[above] {$B$} ;
\draw (f.east.1) to[in=180,out=0] ++(1.2,-.8) to  ++(.75,0) node[above] {$C$};
\draw (g.east.1) to[in=180,out=0]  ++(1.2,.8) to  ++(.75,0) node[above] {$D$};
\end{pic}\]
or equivalently,
 $(g \otimes f)\circ  \sigma_{A,B} = \sigma_{C,D} \circ  (f \otimes g)$ 
holds for all $f \colon A\to C,g \colon B\to D$. Intuitively, this states that boxes can be slid along wires. (Formally this property corresponds to the family $\sigma$ being a \emph{natural transformation}, a concept that shows up often in category theory.)  Second, we require that crossing the same wires twice amounts to no change:
\[\begin{pic}[yscale=.8]\node (botl) at (0, 0) {};
\node (topl) at (0, .8) {};
\draw (topl) to node[above] {$A$} ++(.6,0) to[in=180,out=0] ++(1.2,-.8) to node[below] {$A$} ++(.6,0) to[in=180,out=0]  ++(1.2,.8) to node[above] {$A$} ++(.6,0);
\draw (botl) to node[below] {$B$} ++(.6,0) to[in=180,out=0]  ++(1.2,.8) to node[above] {$B$} ++(.6,0)  to[in=180,out=0] ++(1.25,-.8) to node[below] {$B$} ++(.6,0);
\end{pic}
\enspace=\enspace \begin{pic}[yscale=.8]\node (botl) at (0, 0) {};
\node (topl) at (0, .8) {};
\draw (topl) to node[above] {$A$} ++(.6,0) to ++(1,0);
\draw (botl) to node[below] {$B$} ++(.6,0) to ++(1,0);
\end{pic}
\]
or equivalently $\sigma_{B,A} \circ \sigma_{A,B} = \id[A \otimes B]$. The slogan ``only connectivity matters'' is often used to summarize the pleasant properties of string diagrams for SMCs. While we work somewhat informally with string diagrams, we  stress that they are a perfectly rigorous tool to capture exactly what follows from the axioms of an SMC. To this effect, we recall Selinger's formulation~\cite[Theorem~3.12]{Sel10} of the Joyal--Street coherence theorem~\cite[Theorem~2.3]{joyalstreet}.

\begin{theorem}[Joyal--Street~{\cite[Theorem~2.3]{joyalstreet}}, as formulated in {\cite[Theorem~3.12]{Sel10}}]
   A well-formed equation between morphisms in the language of symmetric monoidal categories follows from the axioms of symmetric monoidal categories if and only if it holds, up to isomorphism of diagrams, in the graphical language.
\end{theorem}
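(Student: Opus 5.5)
This coherence theorem is cited from Joyal--Street, so the plan is to reconstruct the standard proof in outline rather than re-derive it in full. The strategy goes through a free construction: build the free symmetric monoidal category $\mathcal{F}(\Sigma)$ on a signature $\Sigma$ (object generators and typed box generators), and separately a category $\mathcal{G}(\Sigma)$ of string diagrams. Its morphisms are isomorphism classes of progressive plane-free graphs, with boxes labelled by generators and wires labelled by object generators, and with ordered input and output boundaries. The theorem then amounts to showing that the evident interpretation functor $\llbracket-\rrbracket\colon \mathcal{F}(\Sigma)\to\mathcal{G}(\Sigma)$ is an isomorphism of symmetric monoidal categories. Both directions of the ``if and only if'' follow at once: two terms are provably equal from the axioms exactly when they are equal in $\mathcal{F}(\Sigma)$, and this happens exactly when their diagrams are isomorphic.

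The steps, in order, are as follows.

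\textbf{First, make $\mathcal{G}(\Sigma)$ an SMC.} Sequential composition glues output boundary to input boundary. Tensor is disjoint union with concatenated boundaries. The symmetry is the crossing diagram with no boxes. Each axiom then holds because both sides yield isomorphic graphs; this is the ``only connectivity matters'' slogan made precise.

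\textbf{Second, soundness.} Check that every axiom (associativity, unit, interchange~\eqref{eq:interchange}, naturality of $\sigma$, and $\sigma_{B,A}\circ\sigma_{A,B}=\id[A\otimes B]$) is preserved by $\llbracket-\rrbracket$. This gives the ``only if'' direction.

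\textbf{Third, fullness.} Show that every diagram is the image of some term. Since diagrams are acyclic (progressive), the boxes admit a topological order. Slice the diagram so that each slice contains one box tensored with identities, preceded by a permutation of wires. Every permutation is a composite of adjacent transpositions, which are built from $\sigma$ tensored with identities.

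\textbf{Fourth, faithfulness.} This is the main obstacle. The plan is a normal-form argument. Every term is provably equal, using interchange and naturality of $\sigma$, to a ``sliced'' form: a permutation, then $\id\otimes f_1\otimes\id$, then a permutation, and so on. For such normal forms, I would show two things.
\begin{itemize}
\item Any two linearizations of the same diagram's box order are related by swapping adjacent independent boxes. Each such swap is justified by interchange together with naturality of $\sigma$.
\item The permutations between boxes are determined by the wiring. So two permutation terms that induce the same underlying permutation are provably equal, which is the coherence theorem for symmetric monoidal categories restricted to symmetries.
\end{itemize}
Together these show that isomorphic diagrams yield provably equal terms.

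The delicate part is the permutation coherence, together with tracking how an isomorphism of diagrams acts on the box ordering. I would first handle permutations via Mac Lane's coherence, presenting the symmetric group by Coxeter relations, and only then handle the box-commutation step.

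Throughout I would assume strictness, as justified earlier in the paper by the strictification theorem.
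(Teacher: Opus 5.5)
The paper gives no proof of this theorem. It quotes it from Joyal--Street and Selinger and uses it as a black box, so there is no internal argument to compare yours against. Your skeleton is the standard combinatorial one, and it is the right shape: the syntactic free SMC is isomorphic to labelled progressive graphs up to isomorphism, via soundness, fullness by topological sorting, and faithfulness by connecting linear extensions through swaps of adjacent independent boxes plus coherence for symmetries. Two steps, however, do not go through as you state them.

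\emph{Your axiom list omits the hexagon.} In strict form the hexagon reads $\sigma_{A\otimes B,C}=(\sigma_{A,C}\otimes\id[B])\circ(\id[A]\otimes\sigma_{B,C})$. It does not follow from associativity, unit, interchange, naturality and $\sigma_{B,A}\circ\sigma_{A,B}=\id[A\otimes B]$. Without it the ``if'' direction is false, because its two sides have isomorphic diagrams. For a countermodel, take the strict SMC of finite ordinals and bijections, and replace $\sigma_{2,2}$ by $\sigma_{2,2}\circ(\tau\otimes\tau)$, where $\tau$ is the non-identity bijection of $2$; keep all other symmetries unchanged. Naturality and involutivity survive, since $\tau\otimes\tau$ is central in $S_2\times S_2$, commutes with $\sigma_{2,2}$, and squares to the identity. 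The hexagon, however, fails at $A=B=1$, $C=2$. Your plan needs the hexagon in two places. First, it is what rewrites a symmetry on composite objects into adjacent elementary crossings. Second, combined with naturality of $\sigma$ with respect to $\sigma_{a,b}$, it is what yields the braid relation $s_is_{i+1}s_i=s_{i+1}s_is_{i+1}$ in your Coxeter argument. If you cite Mac Lane's coherence you are using the hexagon implicitly, so make it explicit in the axiom list and in the soundness check. The paper's informal description of SMCs also hides it under ``some further equations''.

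\emph{The permutations between boxes are not determined by the wiring.} Even with the box order fixed, a sliced term carries extra data: the height of each box within its layer, and the order of the pass-through wires at every cut. For example, $\id[b]\otimes f$ and $\sigma_{a,b}\circ(f\otimes\id[b])\circ\sigma_{b,a}$ have the same diagram and a single box, yet their permutations differ. So you cannot apply symmetry coherence directly to corresponding permutations of two normal forms.

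To repair this, first use naturality to normalize each layer to $f_k\otimes\id$. Because a diagram isomorphism preserves port order, it then induces, at the cut just before $f_k$, a bijection of wires of the form $\rho_k=\id[A_k]\otimes\rho_k'$, where $A_k$ is the domain of $f_k$. This bijection is the identity at the two boundaries. Next, insert $\rho_k^{-1}\circ\rho_k$ at each cut and use interchange to commute $\rho_k'$ past $f_k$. After this, corresponding inter-box permutations induce the same bijection on positions, and only then does symmetry coherence finish the argument by telescoping. Symmetry coherence, and not just interchange and naturality, is likewise needed to rearrange the permutation between two adjacent independent boxes before you swap them.
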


Given an SMC $\CC$, by a \emph{sub-SMC} of $\CC$ we mean an SMC $\DD$ whose objects and morphisms are sub-collections of those of $\CC$, and whose SMC-structure (sequential and parallel composition, identity morphisms, unit object, symmetries) coincides with that of $\CC$. Sub-SMCs are closed under arbitrary intersections: as a result, one can always form the smallest sub-SMC containing given collections of morphisms and objects, and we will refer to the end-result as the \emph{sub-SMC generated by} this data. 

\subsection{Building an SMC from generators and equations}\label{ssec:gensandrels}

We will now review how to build a symmetric monoidal category from generators and equations, as this will be used in Section~\ref{ssec:baseforUC} to build a category of networks of interactive Turing machines to model UC. A useful analogy is that of building a group from generators and equations. A more detailed yet introductory account can be found in~\cite[Chapter 2]{PZ:stringdiagrams}. 
To generate an SMC, one needs
\begin{itemize}
    \item A set $\OO$ of generating objects, a general object then being an element of the set $\OO^*$ of words in the alphabet $\OO$.
    \item For each $V,W\in \OO^*$, a (possibly empty) set $\MM_{V,W}$ of generating morphisms $V\to W$. 
\end{itemize} 

The symmetric monoidal category generated by $(\OO,\MM)$ has $\OO^*$ as its set of objects, and its morphisms can be thought of as given by string diagrams built from identity wires, symmetries and generating morphisms from the sets $\MM_{V,W}$ modulo the equations we've asserted for SMCs in the previous section (associativity and unitality of $ \otimes$ and the equations for swap).

As a warm up for using this construction in later sections, let us consider an example. We let $\OO$ consist of a single element, so that an element of $\OO^*$ is uniquely specified by its length and hence $\OO^*$ can be identified with $\NN$. We let $\MM_{0,1}$ and $\MM_{2,1}$ be singletons and draw their elements as
\[ 
   \begin{pic}[scale=.6]
    \node[dot] (r) at (0,0){};
    \draw (r) to ++(1,0);
  \end{pic}\enspace \text{and} \enspace 
  \begin{pic}[scale=.6]
    \node[dot] (r) at (0,0){};
    \draw (r) to ++(1,0);
    \draw (r) to[out=90,in=0] ++(-1,.5);
    \draw (r) to[out=270,in=0] ++(-1,-.5);
  \end{pic}
  \]
and let $\MM_{i,j}$ be empty for all other choices of $i,j\in\NN$. Then in the resulting SMC, 
 \[ \begin{pic}[scale=.6]
    \node[dot] (r) at (0,0){};
    \draw (r) to ++(1,0);
    \draw (r) to[out=90,in=0] ++(-1,.5);
    \draw (r) to[out=270,in=0] ++(-1,-.5);
    \node[dot] (s) at (0,1){};
    \draw (s) to ++(1,0);
  \end{pic}
  \enspace ,  \enspace 
   \begin{pic}[scale=.6]
    \node[dot] (r) at (0,1){};
    \draw (r) to ++(1,0);
    \draw (r) to[out=90,in=0] ++(-1,.5);
    \draw (r) to[out=270,in=0] ++(-1,-.5);
    \node[dot] (s) at (0,0){};
    \draw (s) to ++(1,0);
  \end{pic}
  \enspace  \text{and} 
  \begin{pic}[scale=.6]
\node (botl) at (0, -0.5) {};
\node (topl) at (0, 0.5) {};
\draw (topl) to node[above] {} ++(.75,0) to[in=180,out=0] ++(1.5,-1) to node[below] {} ++(.75,0);
\draw (botl) to node[below] {} ++(.75,0) to[in=180,out=0]  ++(1.5,1) to node[above] {} ++(.75,0);
\end{pic}
  \]
give three distinct morphisms $2\to 2$ whereas

 \[ \begin{pic}[scale=.6]
    \node[dot] (r) at (0,0){};
    \draw (r) to ++(1,0);
    \node[dot] (s) at (0,1){};
    \draw (s) to ++(1,0);
  \end{pic}
  \enspace \text{and} \enspace 
  \begin{pic}[yscale=.6]\node[dot] (botl) at (0, -0.5) {};
\node (topl)[dot] at (0, 0.5) {};
\draw (topl) to node[above] {} ++(.6,0) to[in=180,out=0] ++(1.2,-1) to node[below] {} ++(.6,0) to[in=180,out=0]  ++(1.2,1) to node[above] {} ++(.6,0);
\draw (botl) to node[below] {} ++(.6,0) to[in=180,out=0]  ++(1.2,1) to node[above] {} ++(.6,0)  to[in=180,out=0] ++(1.25,-1) to node[below] {} ++(.6,0);
\end{pic}
  \]
denote the same morphism $0\to 2$ due to equations satisfied by any SMC (swapping twice gives the identity).

Having built an SMC from generators, we now discuss building an SMC from generators and equations.  It turns out that there is no real need to allow for equations between objects, as equations between morphisms suffice.\footnote{If one did want to identify two words $A,B\in \OO^*$, one would instead add generating morphisms $f \colon A\to B$ and $g \colon B\to A$ and equations asserting that $g$ is the inverse of $f$. While this doesn't force $A$ and $B$ to be equal in the resulting SMC, it forces them to be isomorphic, which is sufficient for most categorical purposes.} In turn, a defining equation is given by $f = g$, where $f, g $ are two morphisms $V \to W$ in the monoidal category generated by $(\OO,\MM)$.  In the resulting SMC, morphisms can now be thought of as equivalence classes of string diagrams, modulo the defining equations.

Returning to the simple example, we will now add the equations of a commutative monoid. Associativity of the multiplication is captured by 
\[
  \begin{pic}[scale=.6]
    \node[dot] (r) at (0,0){};
    \draw (r) to ++(1,0);
    \draw (r) to[out=90,in=0] ++(-1,.5) node[dot] (s) {};
    \draw (r) to[out=270,in=0] ++(-1,-.5) to ++(-1,0);
    \draw (s) to[out=90,in=0] ++(-1,.5);
    \draw (s) to[out=270,in=0] ++(-1,-.5);
  \end{pic}\enspace =\enspace
  \begin{pic}[scale=.6]
    \node[dot] (r) at (0,0){};
    \draw (r) to ++(1,0);
    \draw (r) to[out=90,in=0] ++(-1,.5)  to ++(-1,0);
    \draw (r) to[out=270,in=0] ++(-1,-.5) node[dot] (s) {};
    \draw (s) to[out=90,in=0] ++(-1,.5);
    \draw (s) to[out=270,in=0] ++(-1,-.5);
  \end{pic}
  \]
while commutativity is given by
   \[   \begin{pic}[scale=.6]
    \node[dot] (r) at (0,0){};
    \draw (r) to ++(1,0);
    \draw (r) to[out=90,in=0] ++(-1,.5) to[in=0,out=180] ++(-1.5,-1) to++(-.5,0);
    \draw (r) to[out=270,in=0] ++(-1,-.5) to[in=0,out=180] ++(-1.5,1) to++(-.5,0);
  \end{pic}\enspace =\enspace 
  \begin{pic}[scale=.6]
    \node[dot] (r) at (0,0){};
    \draw (r) to ++(1,0);
    \draw (r) to[out=90,in=0] ++(-1,.5);
    \draw (r) to[out=270,in=0] ++(-1,-.5);
  \end{pic}
  \]
and the unit laws are 
\[
  \begin{pic}[scale=.6]
    \node[dot] (r) at (0,0){};
    \draw (r) to ++(1,0);
    \draw (r) to[out=90,in=0] ++(-1,.5) node[dot] {};
    \draw (r) to[out=270,in=0] ++(-1,-.5);
  \end{pic}
  \enspace = \enspace
  \begin{pic}[baseline=-.09cm,scale=.6]
  \draw (0,0) to (2,0);
  \end{pic}
  \enspace = \enspace
    \begin{pic}[scale=.6]
    \node[dot] (r) at (0,0){};
    \draw (r) to ++(1,0);
    \draw (r) to[out=90,in=0] ++(-1,.5);
    \draw (r) to[out=270,in=0] ++(-1,-.5) node[dot] {};
  \end{pic}
\]
One can then build an SMC out of these generators and equations, whose morphisms are equivalence classes of string diagrams modulo the equations we've asserted.  In Section~\ref{ssec:baseforUC}, we will apply this machinery to UC.

%% file: 3-uc-categorical.tex
\section{UC, Categorically}
\label{sec:UCcategorical}

The theory we develop starts from an SMC $\CC$, layers some categorical constructions on top, and culminates in Theorems~\ref{thm:securemapsareanSMC} and~\ref{thm:categoricalUCcomposition} giving our composition theorems. In this section, we leave our starting SMC $\CC$ abstract. We will instantiate the general theory with a specific choice of $\CC$ in Section~\ref{ssec:baseforUC} to express UC. 

\subsection{Protocols, hybrids, and the adversary}
\label{ssec:abstractadversary}

We now assume that the starting category $\CC$ is fixed. To help intuition, one may think of the objects of $\CC$ as representing communication channels (possibly of some particular type) and of the morphisms of $\CC$ as representing networks built by connecting interactive computational systems along channels. In keeping with diagrammatic intuition, we will use the terms ``wire,'' ``port,'' and ``interface'' interchangeably to denote the endpoints of communication in our networks. Formally they correspond to identity morphisms in $\CC$. In contrast, ``tape'' is reserved for the concrete UC instantiation in Section~\ref{sec:UCconcrete}, where it refers to the usual input and subroutine-output tapes of interactive Turing machines. These also correspond to identity morphisms on the generating objects of the SMC constructed in Section~\ref{ssec:baseforUC}, thus being special types of wires.

We next use $\CC$ to construct another SMC $\CB$ and then pass to certain sub-SMCs. Ultimately, all of this is only needed to construct our main SMC of interest in Theorem~\ref{thm:securemapsareanSMC}.

As in UC, we will ensure that all protocols have explicit ``backdoor'' connections for the adversary. This is done by building an SMC $\CB$ from $\CC$. Objects of this SMC are just those of $\CC$. A morphism $X \to Y$ in $\CB$ is a  ``backdoored morphism $X\to Y$'', consisting of an object $B_f$ of $\CC$ (the type of the backdoor) and a morphism $f\colon X\to Y\otimes B_f$ in $\CC$,
under a certain equivalence relation, whose details and  motivation we will explain shortly. Intuitively, $f$ represents a network with honest interfaces given by $X$ and $Y$ and adversarial backdoors given by $B_f$. We represent such a pair pictorially by the morphism
\[
\begin{pic}
    \node[box=0/2/0/1, minimum height=8mm] (a) at (0,0) {f};
    \draw (a.west.1) to ++(-.6, 0) node[above] {$X$};
    \draw (a.east.1) to ++(.6, 0) node[right] {$Y$};
    \draw (a.east.2) to ++(.6, 0) node[right] {$B_f$};
\end{pic}
\] 
When the internal structure of a morphism in $\CB$ is not important to us, we will simply write $f \colon X\to Y$ for a morphism $[B_f,f] \colon X \to Y$. 

Pictorially, the equivalence relation is given by setting
\begin{equation}
\label{pic:equivalencerel}
\begin{pic}
    \node[box=0/2/0/1, minimum height=8mm] (a) at (0,0) {f};
    \draw (a.west.1) to ++(-.6, 0) node[above] {$X$};
    \draw (a.east.1) to ++(.6, 0) node[right] {$Y$};
    \draw (a.east.2) to ++(.6, 0) node[right] {$B_f$};
\end{pic}\sim
\begin{pic}
    \node[box=0/2/0/1, minimum height=8mm] (a) at (0,0) {f};
    \node[box=0/1/0/1, minimum height=4mm,minimum width=3mm] (b) at (1,-.2) {c};
    \draw (a.west.1) to ++(-.6, 0) node[above] {$X$};
    \draw (a.east.1) to ++(1.2, 0) node[right] {$Y$};
    \draw (a.east.2) to (b.west.1);
    \draw (b.east.1) to ++(.3, 0) node[right] {$B$};
\end{pic}
\end{equation}
whenever $c$ is an isomorphism.\footnote{We could equally well restrict to only isomorphisms generated by swaps and identities.}
Expressed symbolically, we define $(B_f,f \colon X\to Y \otimes B_f)\sim (B_g,g \colon X\to Y \otimes B_g)$ iff there is an isomorphism $c \colon B_f\to B_g$ such that $g=(\id \otimes c)\circ f$. We will use square brackets when we wish to emphasize that we are working with equivalence classes.

The equivalence relation amounts to saying that it does not matter whether an isomorphism is applied to the adversary's interface. In the case of the concrete UC category built in Section~\ref{sec:UCconcrete}, the only isomorphisms are permutations of backdoor tapes, so in this case equivalence amounts to saying that we do not care how the adversary orders its backdoors. Formally, this equivalence relation is needed so that the interchange law of monoidal categories~\eqref{eq:interchange} holds.

The composite of $[B_f,f] \colon X \to Y$ with $[B_g,g] \colon Y \to Z$ is given pictorially as 
\[
\begin{pic}
    \node[box=0/2/0/1, minimum height=8mm] (a) at (0,0) {f};
    \node[box=0/2/0/1, minimum height=4mm] (b) at (1,.205) {g};
    \draw (a.west.1) to ++(-.4, 0) node[above] {$X$} to ++(-.1,0);
    \draw (a.east.1) to (b.west.1);
    \draw (b.east.1) to ++(.5, 0) node[right] {$Z$};
    \draw (a.east.2) to ++(1.5, 0) node[right] {$B_f$};
    \draw (b.east.2) to ++(.5, 0) node[right] {$B_g$};
\end{pic}
\] 
and in symbols as  $[B_g \otimes B_f, (g \otimes \id[B_f])\circ f]$.
In words, sequential composition is performed by simply composing along the honest networks while accumulating the backdoors. 

This category inherits a symmetric monoidal structure from that of $\CC$. The monoidal product on objects is as in $\CC$, while the product of morphisms is obtained using both the monoidal product and symmetries in $\CC$: the monoidal product of $[B_1,f_1] \colon X_1 \to Y_1$ and $[B_2,f_2] \colon X_2 \to Y_2$  is given by
\[\begin{pic}
\node[box=0/2/0/1, minimum height=5mm] (a) at (0,.75) {f_1};
\draw (a.west.1) to ++(-.4, 0) node[above] {$X_1$} to ++(-.1,0);
\draw (a.east.1) to ++(1.6, 0) node[right] {$Y_1$};
\draw (a.east.2) to ++(.5, 0) to[out=0,in=180]  ++(.6,-.5) to ++(.5,0) node[right] {$B_1$};
\node[box=0/2/0/1, minimum height=5mm] (b) at (0,0) {f_2};
\draw (b.west.1) to ++(-.4, 0) node[above] {$X_2$} to ++(-.1,0);
\draw (b.east.1) to ++(.5, 0) to[out=0,in=180] ++(.6,.5) to ++(.5,0) node[right] {$Y_2$};
\draw (b.east.2) to ++(1.6, 0) node[right] {$B_2$};
\end{pic} \] 
that is, by composing in parallel and using swaps to collect the backdoors together. In symbols, the monoidal product on morphisms is defined as 
\[
    [B_1,f_1] \otimes [B_2,f_2]:=[B_1 \otimes B_2, (\id[Y_1] \otimes \sigma_{B_1,Y_2} \otimes \id[B_2]) \circ(f_1 \otimes f_2)]
\]
where $\sigma$ denotes the swap of two wires. 
We mentioned above that the equivalence relation \eqref{pic:equivalencerel} is needed so that the interchange law of monoidal categories~\eqref{eq:interchange} holds. The reader is invited to compute the two sides of the interchange law and see that they differ by a swap as the adversarial interfaces end up in different orders. This concludes our description of the SMC $\CB$. 

Our theory is defined relative to two nested sub-SMCs $\DR\subseteq  \DB$ of $\CB$. Passing from $\CB$ to a sub-SMC $\DB$ is needed to  model some idiosyncrasies of UC. In the case of UC, the main added requirement here is that every machine has exactly one backdoor channel for the adversary. In turn, the choice of a further sub-SMC $\DR$ encodes the model of corruption by specifying the allowed building blocks for real protocols. Formally, this is needed so that the task of constructing some ideal functionality does not become trivial. In the case of UC, $\DR$ will consist of networks of ITMs that are fully corruptible by the adversary, but one could choose differently: if the basic building blocks of $\DR$ are machines  which leak all information they have to the adversary without giving any control to the adversary, then one gets a  model of  honest-but-curious behavior.

\subsection{Functionalities and security}

We now assume that our SMCs $\DR\subseteq  \DB$ have been fixed. We now define what we mean by \emph{resources} in these categories.
\begin{definition}
A \emph{resource (a.k.a.\ state)} $r$ on $X$ is a morphism $r \colon I \to X$ of $\DB$.
\end{definition}

Looking ahead, in our concrete UC category (defined in Section~\ref{sec:UCconcrete}) resources will  correspond to (hybrid) protocols. In particular, any ideal functionality will be a resource, motivating the terminology.   

Formally, resources are equivalence classes of the form $[B_r,r \colon I\to X \otimes B_r]$, however, we will show that our definitions of security do not depend on the particular choice of a representative. Therefore, we will draw a resource as
\[
\begin{pic}
    \node[state,minimum width=10mm] (r) at (0,0){$r$};
    \draw (r.north east) to ++(.6,0) node[above] {$X$};
    \draw (r.south east) to ++(.6,0) node[below] {$B_r$};
  \end{pic}
\]

Before giving our security definition, we need one further ingredient, namely, an abstract notion of indistinguishability. In the case of UC, this will coincide with the usual notion of  computational indistinguishability (defined formally in section~\ref{ssec:indistinguishability}). For now, all we need to know about computational indistinguishability is that it gives, for each object $X$ of $\CC$, an equivalence relation $\approx$ on states $I\to X$ in $\CC$, and that these equivalence relations behave well with respect to sequential and parallel composition in the sense defined below.

\begin{definition}\label{def:indistinguishability}
Given an equivalence relation $\approx_X$ on states $I\to X$ in $\CC$ for every object $X$,  we say that $\approx:=(\approx_X)_{X\in\CC}$ \emph{respects (sequential and parallel) composition} if 
\begin{enumerate}
    \item If $r\approx_X s$ for $r,s \colon I\to X$ in $\CC$ and  $f \colon X\to Y$ is a morphism in $\CC$, then $f\circ r\approx_Y f\circ  s$. 
    \item If $r_i,s_i \colon I\to X_i$ for $i=1,2$ are states in $\CC$ such that $r_i\approx_{X_i} s_i$, then $r_1 \otimes r_2\approx_{X_1 \otimes X_2} s_1 \otimes s_2$.
\end{enumerate}
\end{definition}
From now on, we omit the subscripts from $\approx$ as they can be inferred from context.

We now give our security definition, which follows the simulation paradigm by quantifying over all possible actions of the adversary.
\begin{definition}\label{def:security}
    Let $r \colon I\to X$ and $s \colon I \to Y$ be resources. A morphism $f \colon X\to Y$ of $\DR$ 
    is a \emph{secure emulation}  $r\to s$ if for every object $B$ and every morphism  $a \colon B_f \otimes B_r\to B$ (corresponding to the adversary) in $\CC$ 
there exists a $b \colon B_s\to B$ in $\CC$ (corresponding to a simulator) such that
   \begin{equation}\label{eq:UCsecurity}
\begin{pic}
    \node[state,minimum width=1cm] (r) at (0,0){$r$};
    \node[box=0/2/0/1, minimum height=5.3mm] (a) at (1,.23) {f};
    \node[box=0/1/0/2, minimum height=5.3mm] (b) at (2,-.08) {a};
    \draw (r.north east) to (a.west.1);
    \draw (r.south east) to ([yshift=-.5pt]b.west.2);
    \draw ([yshift=.5pt]a.east.1) to ++(1.6, 0) node[above] {$Y$};
    \draw ([yshift=-0.5pt]a.east.2) to ([yshift=0.5pt]b.west.1);
    \draw (b.east.1) to ++(.6, 0) node[below] {$B$};
\end{pic}\, \approx \,
\begin{pic}
    \node[state,minimum width=1cm] (r) at (0,0){$s$};
    \node[box=0/1/0/1, minimum height=5mm] (b) at (1.1,-.23) {b};
    \draw (r.north east) to ++(1.7,0) node[above] {$Y$};
    \draw (r.south east) to (b.west.1);
    \draw (b.east.1) to ++(.5, 0) node[below] {$B$};
  \end{pic}\end{equation}
 \ie 
  $(\id[Y] \otimes a) \circ (f \otimes \id[B_r]) \circ r
  \;\approx\;
  (\id[Y] \otimes b) \circ s $ in $\CC$.
\end{definition}

Note that this is more or less the picture people would usually draw when defining simulation-based security, except that (i)~it is now a \emph{rigorous} string diagram and (ii)~the equivalence relation $\approx$ implicitly quantifies over all environments, and as a result the environment is not drawn (see Lemma~\ref{lem:indistinguishability} for a formal justification of this in the case of UC). Our completeness of the dummy adversary Theorem~\ref{thm:categoricalcompletenessofdummy} will allow us to also omit the universally quantified $a$ from the picture.

A priori, the definition of security depends on the choice of a representative. We now observe that this is not the case. The idea in the proof is simple: equivalence classes only differ by applying an isomorphism on the adversarial port, and using this we can show that one choice of representatives satisfies security if any other does. 

\begin{proposition}
    Security is well-defined.
\end{proposition}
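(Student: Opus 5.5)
The definition of secure emulation $f\colon r\to s$ mentions three representatives at once: $(B_r,r)$, $(B_s,s)$ and $(B_f,f)$. So the plan is to show that the defining condition \eqref{eq:UCsecurity} is invariant under changing any one of them within its equivalence class \eqref{pic:equivalencerel}. Changing all three is then just three single changes in a row. By symmetry of $\sim$, it suffices in each case to prove one implication: if the condition holds for one representative, it holds for the other.

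\emph{Changing $f$.} Let $f'=(\id[Y]\otimes c)\circ f$ with $c\colon B_f\to B_{f'}$ an isomorphism, and suppose security holds for $f$. Given an adversary $a'\colon B_{f'}\otimes B_r\to B$ against $f'$, set $a:=a'\circ(c\otimes \id[B_r])$. This is a morphism of $\CC$, so the hypothesis yields a simulator $b$ with
\[(\id[Y]\otimes a)\circ(f\otimes\id[B_r])\circ r\approx(\id[Y]\otimes b)\circ s.\]
Sliding $c$ along its wire, which uses only functoriality of $\otimes$ and the interchange law \eqref{eq:interchange}, rewrites the left-hand side as $(\id[Y]\otimes a')\circ(f'\otimes\id[B_r])\circ r$. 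The same $b$ therefore works against $a'$.

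\emph{Changing $r$.} Let $r'=(\id[X]\otimes d)\circ r$ with $d\colon B_r\to B_{r'}$ an isomorphism. Given $a'\colon B_f\otimes B_{r'}\to B$, use $a:=a'\circ(\id[B_f]\otimes d)$. Diagrammatically, $d$ slides past the box $f$, since the two act on disjoint wires, and is absorbed into the adversary. Again the simulator for $a$ serves for $a'$.

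\emph{Changing $s$.} Let $s'=(\id[Y]\otimes e)\circ s$ with $e\colon B_s\to B_{s'}$ an isomorphism. Given an adversary $a$ and a simulator $b\colon B_s\to B$ for the representative $s$, take $b':=b\circ e^{-1}$. Then $(\id[Y]\otimes b')\circ s'=(\id[Y]\otimes b)\circ s$, so the right-hand side of \eqref{eq:UCsecurity} is literally unchanged. This is the only place where invertibility of $e$ is actually needed; in the other two cases it is needed only so that $\sim$ is symmetric.

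None of these steps is a real obstacle. In every case both sides of the relevant equation are equal morphisms in $\CC$, not merely $\approx$-related, so we never invoke Definition~\ref{def:indistinguishability}. The equalities can be checked by the string-diagram moves above, which are justified by the Joyal--Street coherence theorem. The one point needing care is making sure each modified adversary and simulator has the correct type.
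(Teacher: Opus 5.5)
Your proof is correct and is essentially the paper's argument: absorb the backdoor isomorphisms of $f$ and $r$ into the adversary, using $a\circ(c_f\otimes c_r)$. Then precompose the simulator with the inverse of the isomorphism on the backdoor of $s$, that is, use $b\circ c_s^{-1}$. The only difference is cosmetic: you change the three representatives one at a time, whereas the paper handles all three in a single chain of equalities and one $\approx$ step.
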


\begin{proof}
    Assume $r$, $s$ and $f$ satisfy Definition~\ref{def:security}. Consider arbitrary $r',s',f'$ with $r\sim r'$, $s\sim s'$ and $f\sim f'$ and fix isomorphisms $c_r \colon B_r\to B_{r'}$, $c_s \colon B_s\to B_{s'}$ and $c_f \colon B_f\to B_{f'}$ witnessing this. To show that $r',s'$ and $f'$ also satisfy  Definition~\ref{def:security}, consider an arbitrary adversary $a \colon B_{f'} \otimes B_{r'}\to B$ in $\CC$, and compute as follows: 
\begin{align*}\begin{pic}
    \node[state,minimum width=1cm] (r) at (0,0){$r'$};
    \node[box=0/2/0/1, minimum height=5.3mm] (a) at (1,.23) {f'};
    \node[box=0/1/0/2, minimum height=5.3mm] (b) at (2,-.08) {a};
    \draw (r.north east) to (a.west.1);
    \draw (r.south east) to ([yshift=-.5pt]b.west.2);
    \draw ([yshift=.5pt]a.east.1) to ++(1.6, 0) node[above] {};
    \draw ([yshift=-0.5pt]a.east.2) to ([yshift=0.5pt]b.west.1);
    \draw (b.east.1) to ++(.6, 0) node[above] {};
\end{pic} \, &= \,
\begin{pic}
    \node[state,minimum width=2cm] (r) at (0,0) {$r$};
    \node[box=0/2/0/1, minimum height=7mm] (f) at (1,.43) {f};
    \node[box=0/1/0/2, minimum height=12mm] (a) at (3,-.1) {a};
    \node[box=0/1/0/1, minimum height=4mm] (cf) at (2,.225) {c_f};
    \node[box=0/1/0/1, minimum height=4mm] (cr) at (2,-.43) {c_r};
    \draw (r.north east) to (f.west.1);
    \draw (r.south east) to (cr.west);
    \draw (cr.east) to([yshift=-.75pt]a.west.2);
    \draw ([yshift=.5pt]f.east.1) to ++(2.6, 0) node[above] {};
    \draw ([yshift=-0.75pt]f.east.2) to (cf.west);
    \draw (cf.east) to ([yshift=.5pt]a.west.1);
    \draw (a.east.1) to ++(.6, 0) node[above] {};
        \node[fit=(a) (cf) (cr), draw, dashed,yshift=-1.5pt] (bigbox) {};
\end{pic} \\
\approx 
\begin{pic}
    \node[state,minimum width=1cm] (r) at (0,0){$s$};
    \node[box=0/1/0/1, minimum height=5mm] (b) at (1.1,-.23) {b};
    \draw (r.north east) to ++(1.7,0) node[above] {};
    \draw (r.south east) to (b.west.1);
    \draw (b.east.1) to ++(.5, 0) node[above] {};
  \end{pic} &=
  \begin{pic}
    \node[state,minimum width=1cm] (r) at (0,0){$s'$};
    \node[box=0/1/0/1, minimum height=5mm] (cs) at (1,-.23) {c_s^{-1}};
    \node[box=0/1/0/1, minimum height=6mm] (b) at (2,-.23) {b};
    \draw (r.north east) to ++(2.6,0) node[above] {};
    \draw (r.south east) to (cs.west);
    \draw (cs.east) to (b.west);
    \draw (b.east) to ++(.5, 0) node[above] {};
        \node[fit=(b) (cs), draw, dashed,yshift=-1pt] (bigbox) {};
  \end{pic}\end{align*}
where the first equation follows from the defining properties of $c_f$ and $c_r$, the second step follows by security of $f \colon r\to s$, and the final equation follows from properties of $c_s$.
\end{proof}

To check security, there is a convenient special case which is sufficient. 
\begin{theorem}
[Completeness of the dummy adversary] 
\label{thm:categoricalcompletenessofdummy}  Let $r \colon I\to X$ and $s \colon I \to Y$ be resources. If a morphism $f \colon X\to Y$ in $\DR$ satisfies
   \begin{equation}\label{eq:UCsecuritywithdummy}
\begin{pic}
    \node[state,minimum width=1cm] (r) at (0,0){$r$};
    \node[box=0/2/0/1, minimum height=5mm] (a) at (1,.23) {f};
    \draw (r.south east) to ++(1.65,0) node[right] {$B_r$};
    \draw (r.north east) to (a.west.1);
    \draw ([yshift=1pt]a.east.1) to ++(.5, 0) node[right] {$Y$};
    \draw ([yshift=-1pt]a.east.2) to ++(.5,0) node[right] {$B_f$};
  \end{pic}  \enspace\approx \enspace
\begin{pic}
    \node[state,minimum width=1cm] (r) at (0,0){$s$};
    \node[box=0/2/0/1, minimum height=5mm] (b) at (1,-.23) {b};
    \draw (r.north east) to ++(1.65,0) node[right] {$Y$};
    \draw (r.south east) to (b.west.1);
    \draw (b.east.1) to ++(.5, 0) node[right] {$B_f$};
    \draw (b.east.2) to ++(.5, 0) node[right] {$B_r$};
  \end{pic}\end{equation}
\ie $  (f \otimes \id[B_r]) \circ r
  \;\approx\;
  (\id[Y] \otimes b) \circ s$
 for some $b$,  then $f$ is a secure emulation $r\to s$.
\end{theorem}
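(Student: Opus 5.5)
The plan is to build, for an arbitrary adversary, its simulator by post-composing the dummy simulator $b$ with that adversary. We then transfer the dummy indistinguishability \eqref{eq:UCsecuritywithdummy} along that adversary using clause~1 of Definition~\ref{def:indistinguishability}.

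Fix an object $B$ and an arbitrary adversary $a \colon B_f \otimes B_r \to B$ in $\CC$. Let $b \colon B_s \to B_f \otimes B_r$ be the morphism provided by the hypothesis \eqref{eq:UCsecuritywithdummy}. Define the candidate simulator
\[
b' := a \circ b \colon B_s \to B .
\]
Note that no constraint other than being a morphism of $\CC$ is imposed on simulators in Definition~\ref{def:security}, so $b'$ is admissible.

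Next, consider the morphism $g := \id[Y] \otimes a \colon Y \otimes B_f \otimes B_r \to Y \otimes B$ in $\CC$. By clause~1 of Definition~\ref{def:indistinguishability} (closure of $\approx$ under post-composition with morphisms of $\CC$), applied to the states on both sides of \eqref{eq:UCsecuritywithdummy}, we obtain
\[
(\id[Y] \otimes a)\circ (f \otimes \id[B_r]) \circ r \;\approx\; (\id[Y] \otimes a)\circ(\id[Y] \otimes b)\circ s .
\]
The left-hand side is exactly the left-hand side of \eqref{eq:UCsecurity}. The right-hand side equals $(\id[Y] \otimes (a\circ b))\circ s = (\id[Y]\otimes b')\circ s$ by functoriality of $\otimes$, i.e.\ the interchange law \eqref{eq:interchange} with identities. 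Diagrammatically, this amounts to plugging the box $a$ onto the wires $B_f, B_r$ on both sides of \eqref{eq:UCsecuritywithdummy} and merging $b$ and $a$ into one dashed box. Hence \eqref{eq:UCsecurity} holds with simulator $b'$. Since $a$ was arbitrary, $f$ is a secure emulation $r \to s$.

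There is no real obstacle here. The only points needing care are bookkeeping ones. First, the wire ordering $B_f \otimes B_r$ on the right-hand side of \eqref{eq:UCsecuritywithdummy} must match the domain of $a$; it does by the statement. Second, the argument uses a fixed choice of representatives for $r, s, f$, which is legitimate because the previous proposition shows security is independent of that choice. Notably, only clause~1 of Definition~\ref{def:indistinguishability} is needed, not compatibility with $\otimes$.
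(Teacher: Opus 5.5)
Your proposal is correct and matches the paper's own proof: post-compose both sides of \eqref{eq:UCsecuritywithdummy} with $\id[Y]\otimes a$, use that $\approx$ respects composition, and apply interchange to obtain the simulator $a\circ b$.
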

\begin{proof}
Let us assume \eqref{eq:UCsecuritywithdummy}. Then, as $\approx$ respects composition, 
 \[ \begin{pic}
    \node[state,minimum width=1cm] (r) at (0,0){$r$};
    \node[box=0/2/0/1, minimum height=5.3mm] (a) at (1,.23) {f};
    \node[box=0/1/0/2, minimum height=5.3mm] (b) at (2,-.08) {a};
    \draw (r.north east) to (a.west.1);
    \draw (r.south east) to ([yshift=-.5pt]b.west.2);
    \draw ([yshift=.5pt]a.east.1) to ++(1.6, 0) node[above] {};
    \draw ([yshift=-0.5pt]a.east.2) to ([yshift=0.5pt]b.west.1);
    \draw (b.east.1) to ++(.6, 0) node[above] {}; 
\end{pic} \approx 
  \begin{pic}
    \node[state,minimum width=1cm] (r) at (0,0){$s$};
    \node[box=0/2/0/1, minimum height=5mm] (b) at (1,-.23) {b};
    \node[box=0/1/0/2, minimum height=5mm] (a) at (2,-.23) {a};
    \draw (r.north east) to ++(2.6,0) node[above] {};
    \draw (r.south east) to (b.west.1);
    \draw (b.east.1) to  (a.west.1);
    \draw (b.east.2) to  (a.west.2);
    \draw (a.east.1) to ++(.5, 0) node[above] {};
  \end{pic}\]
holds for all attackers  $a\colon B_f\otimes B_r\to B$ in $\CC$, giving us~\eqref{eq:UCsecurity}. 
Symbolically, this derivation becomes 
\begin{align*}
(\id[Y] \otimes a) \circ (f \otimes \id[B_r]) \circ r
&\approx (\id[Y] \otimes a) \circ (\id[Y] \otimes b) \circ s \\
&= (\id[Y] \otimes (a \circ b)) \circ s. 
\quad \qedhere
\end{align*}

\end{proof}
This result is an analogue of the ``completeness of the dummy adversary'' in UC. However, the identity maps are not literally the dummy adversary in UC, as the latter has to multiplex several channels into the only channel it exposes to the environment. We prove a result capturing completeness for a wide class of adversaries, which will be needed later for our translation Theorem~\ref{thm:translation} into UC. 

\begin{corollary}
\label{cor:splitmonocompleteness} 
 Let $r \colon I\to X$ and $s \colon I \to Y$ be resources. Assume $f\colon X\to Y$ in $\DR$ satisfies the security equation~\eqref{eq:UCsecurity} for some fixed $a$ and $b$ in $\CC$. Assume moreover that $a$ is
\emph{split mono on $f$ applied to $r$} in the sense that
there is a morphism $\bar a$ in $\CC$ such that 
\begin{equation}\label{eq:splitmono}
\begin{pic}
    \node[state,minimum width=1cm] (r) at (0,0){$r$};
    \node[box=0/2/0/1, minimum height=5.3mm] (f) at (1,.23) {f};
    \node[box=0/1/0/2, minimum height=5.3mm] (a) at (2,-.08) {a};
    \node[box=0/2/0/1, minimum height=5.3mm] (ai) at (3,-.08) {\bar a};
    \draw (r.north east) to (f.west.1);
    \draw (r.south east) to ([yshift=-.5pt]a.west.2);
    \draw ([yshift=.5pt]f.east.1) to ++(2.6, 0) node[above] {};
    \draw ([yshift=-0.5pt]f.east.2) to ([yshift=0.5pt]a.west.1);
    \draw (a.east.1) to (ai.west.1);
    \draw ([yshift=.5pt]ai.east.1) to ++(.6, 0) node[above] {};
    \draw ([yshift=-.5pt]ai.east.2) to ++(.6, 0) node[below] {};
\end{pic}
\enspace\approx\enspace
\begin{pic}
    \node[state,minimum width=1cm] (r) at (0,0) {$r$};
    \node[box=0/2/0/1, minimum height=5mm] (f) at (1,.23) {f};
    \draw (r.south east) to ++(1.65,0) node[below] {};
    \draw (r.north east) to (f.west.1);
    \draw ([yshift=1pt]f.east.1) to ++(.5, 0) node[above] {};
    \draw ([yshift=-1pt]f.east.2) to ++(.5,0);
\end{pic}
\end{equation}
\ie   $(\id[Y] \otimes (\bar a \circ a)) \circ (f \otimes \id[B_r]) \circ r
  \;\approx\;
  (f \otimes \id[B_r]) \circ r$.
Then $f$ is a secure emulation $r\to s$.
\end{corollary}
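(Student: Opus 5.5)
The plan is to reduce the corollary to Theorem~\ref{thm:categoricalcompletenessofdummy}. That theorem asks for a single simulator $b'\colon B_s\to B_f\otimes B_r$ satisfying the dummy-adversary equation~\eqref{eq:UCsecuritywithdummy}. The natural candidate is to post-compose the given simulator $b$ with the left inverse of $a$:
\[ b' := \bar a\circ b . \]

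The derivation is a short chain in which $\approx$ is used twice and equality once. Starting from the left-hand side of~\eqref{eq:UCsecuritywithdummy}:
\begin{enumerate}
\item By the split-mono hypothesis~\eqref{eq:splitmono}, read from right to left and using symmetry of $\approx$, we have $(f\otimes\id[B_r])\circ r \approx (\id[Y]\otimes\bar a)\circ(\id[Y]\otimes a)\circ(f\otimes\id[B_r])\circ r$. Here I use functoriality of $\id[Y]\otimes -$ to split $\id[Y]\otimes(\bar a\circ a)$.
\item The security equation~\eqref{eq:UCsecurity} for the fixed $a,b$ gives $(\id[Y]\otimes a)\circ(f\otimes\id[B_r])\circ r\approx(\id[Y]\otimes b)\circ s$. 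Applying clause~1 of Definition~\ref{def:indistinguishability} with the morphism $\id[Y]\otimes\bar a$ yields $(\id[Y]\otimes\bar a)\circ(\id[Y]\otimes a)\circ(f\otimes\id[B_r])\circ r\approx(\id[Y]\otimes\bar a)\circ(\id[Y]\otimes b)\circ s$.
\item The last expression equals $(\id[Y]\otimes(\bar a\circ b))\circ s$, again by functoriality of $\otimes$.
\end{enumerate}
Transitivity of $\approx$, which holds because it is an equivalence relation, then gives $(f\otimes\id[B_r])\circ r\approx(\id[Y]\otimes b')\circ s$. Theorem~\ref{thm:categoricalcompletenessofdummy} finishes the proof.

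Diagrammatically this is just: insert $a$ followed by $\bar a$ on the backdoor wires, replace the real world with $a$ by the ideal world with $b$, and absorb $\bar a$ into the simulator. The only points needing care are bookkeeping.

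The typing of $\bar a$ must be $B\to B_f\otimes B_r$ (with $B$ the codomain of $a$). This follows from the picture in~\eqref{eq:splitmono}, since its output wires match those of $f\otimes\id[B_r]$ on the backdoor side.

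The hypothesis~\eqref{eq:splitmono} is only an $\approx$, not an equality. This is the one place the argument departs from an honest split mono, and it is handled by symmetry and transitivity as above.

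I do not expect a genuine obstacle. The main thing to check is that every morphism used ($\bar a$, $b$, $a$) lives in $\CC$, so that clause~1 of Definition~\ref{def:indistinguishability} applies and the resulting $b'$ is a legitimate simulator in $\CC$.
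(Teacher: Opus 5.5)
Your proposal is correct and matches the paper's proof: you attach $\bar a$ to the backdoor output of both sides of \eqref{eq:UCsecurity}, use \eqref{eq:splitmono} to obtain \eqref{eq:UCsecuritywithdummy} with simulator $\bar a\circ b$, and conclude by Theorem~\ref{thm:categoricalcompletenessofdummy}. Your write-up just makes explicit the symmetry, transitivity and functoriality steps that the paper's one-line proof leaves implicit.
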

\begin{proof} Attach $\bar{a}$ to bottom wires on both sides of~\eqref{eq:UCsecurity}: then equation~\eqref{eq:splitmono} implies~\eqref{eq:UCsecuritywithdummy} which is sufficient by Theorem~\ref{thm:categoricalcompletenessofdummy}.
\end{proof}
In the context of UC, the morphism $a$ above will be the actual UC dummy adversary.

\subsection{Categorical composition theorems}

For us, the core of the composition theorems is that \emph{secure emulation maps form an SMC} as this captures  that secure maps are closed under sequential and parallel composition, \emph{and} also that these compositions are suitably well-behaved. In our opinion, this is what Abstract Cryptography gestures at when talking about  ``composition-order invariance''~\cite[Definition~13]{MR11}. We now give our composition theorem.

\begin{theorem}[SMC structure on secure emulations]
\label{thm:securemapsareanSMC}
    Resources and secure emulations between them form a symmetric monoidal category.
\end{theorem}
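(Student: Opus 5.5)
The plan is to realise the category of secure emulations as a subcategory-like structure sitting over $\DR$. Its objects are pairs $(X,r)$ with $r\colon I\to X$ a resource, i.e.\ a morphism of $\DB$. A morphism $(X,r)\to(Y,s)$ is a morphism $f\colon X\to Y$ of $\DR$ that is a secure emulation $r\to s$; by the well-definedness proposition this is independent of the chosen representatives. Composition, identities, tensor, unit and symmetries are all inherited from $\DR\subseteq\CB$. On objects the tensor is $(X,r)\otimes(X',r')=(X\otimes X',r\otimes r')$, with unit $(I,\id[I])$, noting that $\id[I]$ has trivial backdoor $I$. Once each structural operation is shown to preserve security, the SMC axioms (associativity, interchange, naturality and involutivity of swaps, coherence) hold automatically, because they already hold in $\CB$. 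One point needs care here: the associators, unitors and symmetries of $\CB$ must themselves be secure emulations between the appropriate tensored resources. By strictification we may treat associators and unitors as identities, so only the identities and the symmetries need checking.

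\textbf{Identities and symmetries.} The identity $\id[X]$ in $\CB$ is $[I,\id[X]]$, so its backdoor is $I$. Given an adversary $a\colon B_r\to B$, we take $b=a$, and both sides of \eqref{eq:UCsecurity} agree on the nose; since $\approx$ is reflexive, security follows. For the symmetry $\sigma\colon (X,r)\otimes(Y,s)\to(Y,s)\otimes(X,r)$, the backdoor is again $I$. The left-hand side is $\sigma\circ(r\otimes s)$ followed by $a$ on $B_r\otimes B_s$, after the backdoor swaps built into the $\CB$-tensor. The right-hand side resource $s\otimes r$ has backdoor $B_s\otimes B_r$, so we choose $b=a\circ\sigma_{B_s,B_r}$; the two diagrams are then equal by naturality of swaps.

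\textbf{Sequential composition.} Suppose $f\colon r\to s$ and $g\colon s\to t$ are secure, with $g\circ f=[B_g\otimes B_f,(g\otimes\id)\circ f]$. Given an adversary $a\colon B_g\otimes B_f\otimes B_r\to B$, we view the subdiagram consisting of $r$, $f$ and the wires $B_f\otimes B_r$ together with $a$ as a new adversary attacking $g$ alone. More precisely, we apply security of $f$ with the adversary $a_1$ defined as ``keep $B_f\otimes B_r$ unchanged'' (the identity, by Theorem~\ref{thm:categoricalcompletenessofdummy}-style reasoning). This gives $b_1\colon B_s\to B_f\otimes B_r$ with $(f\otimes\id)\circ r\approx(\id\otimes b_1)\circ s$. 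Postcomposing with $g\otimes\id$ and then $a$ is allowed by clause 1 of Definition~\ref{def:indistinguishability}; note this postcomposition is a morphism of $\CC$. The result replaces $r,f$ by $s,b_1$. Security of $g$ against the adversary $a\circ(\id[B_g]\otimes b_1)\colon B_g\otimes B_s\to B$ then yields $b$ with the final indistinguishability against $t$. Transitivity of $\approx$ closes the argument. Diagrammatically this amounts to sliding boxes into the dashed adversary box twice.

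\textbf{Parallel composition.} This is the step I expect to be the main obstacle, because the adversary $a\colon B_{f_1}\otimes B_{f_2}\otimes B_{r_1}\otimes B_{r_2}\to B$ sees both sides jointly, while Definition~\ref{def:indistinguishability}(2) only lets us tensor indistinguishable states. The plan is to work in dummy form. From security of $f_i$ against the identity adversary (for $f_i\colon r_i\to s_i$) we get $b_i$ with $(f_i\otimes\id)\circ r_i\approx(\id\otimes b_i)\circ s_i$. Clause 2 tensors these two relations. Clause 1 then postcomposes with the permutation of wires that realises the $\CB$-tensor's rearrangement of outputs and backdoors. This yields the dummy condition \eqref{eq:UCsecuritywithdummy} for $f_1\otimes f_2$ against $s_1\otimes s_2$, with simulator $b$ equal to $b_1\otimes b_2$ conjugated by swaps. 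Theorem~\ref{thm:categoricalcompletenessofdummy} then finishes. The only delicate bookkeeping is matching the swap conventions in the definitions of $\otimes$ and of resources in $\CB$, which string diagrams make routine. The same dummy-adversary trick could also streamline the sequential case.
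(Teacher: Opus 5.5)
Your proposal is correct and follows essentially the paper's proof: for sequential composition you chain dummy-form security of $f$ with security of $g$ via interchange, for parallel composition you tensor the dummy-form equations and rearrange wires, you check swaps directly, and the SMC axioms are inherited from $\CB$. The only difference is cosmetic: you apply security of $g$ to the absorbed adversary $a\circ(\id[B_g]\otimes b_1)$ rather than in dummy form, which is equivalent by Theorem~\ref{thm:categoricalcompletenessofdummy}.
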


\begin{proof}
We repeatedly use Theorem~\ref{thm:categoricalcompletenessofdummy}.
We first observe that secure emulations are closed under sequential composition. Assuming $f \colon r\to s$ and $g \colon s\to t$ are secure, so is $g \circ f \colon r\to t$:
\begin{align*}\begin{pic}
\node[state,minimum width=1cm] (r) at (0,0){$r$};
\node[box=0/2/0/1, minimum height=6mm] (a) at (.9,.23) {f};
\node[box=0/2/0/1, minimum height=4mm] (b) at (1.8,.385) {g};
\draw (r.north east) to (a.west.1);
\draw (r.south east) to ++(2.3,0);
\draw (a.east.1) to (b.west.1);
\draw (b.east.1) to ++(.35, 0) node[above] {};
\draw (a.east.2) to ++(1.25, 0) node[above] {};
\draw (b.east.2) to ++(.35, 0) node[below] {};
\end{pic}
 \ &\approx  \ 
 \begin{pic}
    \node[state,minimum width=1cm] (r) at (0,0){$s$};
    \node[box=0/2/0/1, minimum height=5mm] (b) at (.9,-.23) {b};
    \node[box=0/2/0/1, minimum height=5mm] (a) at (1.8,.23) {g};
    \draw (r.north east) to (a.west.1);
    \draw (r.south east) to (b.west.1);
    \draw (b.east.1) to ++(1.25, 0) node[above] {};
    \draw (b.east.2) to ++(1.25, 0) node[above] {};
    \draw ([yshift=1pt]a.east.1) to ++(.35, 0) node[above] {};
    \draw ([yshift=-1pt]a.east.2) to ++(.35,0);
  \end{pic} = \\ 
  \begin{pic}
    \node[state,minimum width=1cm] (s) at (0,0){$s$};
    \node[box=0/2/0/1, minimum height=3mm] (b) at (1.8,-.23) {b};
    \node[box=0/2/0/1, minimum height=5mm] (g) at (.9,.23) {g};
    \draw (s.north east) to (g.west.1);
    \draw (s.south east) to (b.west.1);
    \draw (b.east.1) to ++(.35, 0);
    \draw (b.east.2) to ++(.35, 0);
    \draw ([yshift=.5pt]g.east.1) to ++(1.25, 0);
    \draw ([yshift=-.5pt]g.east.2) to ++(1.25,0);
  \end{pic} \ &\approx \ 
  \begin{pic}
\node[state,minimum width=1cm] (t) at (0,0){$t$};
\node[box=0/2/0/1, minimum height=6mm] (a) at (.9,-.23) {a};
\node[box=0/2/0/1, minimum height=4mm] (b) at (1.8,-.385) {b};
\draw (t.south east) to (a.west.1);
\draw (t.north east) to ++(2.3,0);
\draw (a.east.2) to (b.west.1);
\draw (b.east.1) to ++(.35, 0) node[above] {};
\draw (b.east.2) to ++(.35, 0) node[above] {};
\draw (a.east.1) to ++(1.25, 0) node[below] {};
\draw (b.east.2) to ++(.35, 0) node[below] {};
\end{pic}\end{align*}
where the first (third) step follows from security of $f$ (of $g$) and $\approx$ respecting composition, and the second equation follows from the interchange law. 
Equationally,
\begin{align*}
&(g \otimes \id[B_f \otimes B_r]) \circ (f \otimes \id[B_r]) \circ r \\
&\approx (g \otimes \id[B_f \otimes B_r]) \circ (\id[Y] \otimes b) \circ s
  && \text{(security of } f\text{)} \\
&= (g \otimes b) \circ s
  && \text{(interchange)} \\
&= (\id[Z \otimes B_g] \otimes b) \circ (g \otimes \id[B_s]) \circ s
  && \text{(interchange)} \\
&\approx (\id[Z \otimes B_g] \otimes b) \circ (\id[Z] \otimes a) \circ t
  && \text{(security of } g\text{)} \\
&= \bigl(\id[Z] \otimes ((\id[B_g] \otimes b) \circ a)\bigr) \circ t.
  && \text{(interchange)}
\end{align*}

To show that secure maps are closed under parallel composition, we let $f \colon r\to s$ and $g \colon t\to u$ be secure and compute as follows: 
\[
\begin{pic} 
    \node[state,minimum width=1cm] (r) at (0,0){$r$};
    \node[box=0/2/0/1, minimum height=5mm] (f) at (1,.23) {f};
    \draw (r.south east) to[in=180,out=0] ++(.5,-.55) to ++(1.15,0) node[below] {};
    \draw (r.north east) to (f.west.1);
    \draw ([yshift=1pt]f.east.1) to ++(.5, 0) node[above] {};
    \draw ([yshift=-1pt]f.east.2) to[in=180,out=0]   ++(.5,-.25);
    \node[state,minimum width=1cm] (t) at (0,-1.1){$t$};
    \node[box=0/2/0/1, minimum height=5mm] (b) at (1,-.35) {g};
    \draw (t.south east) to ++(1.65,0) node[below] {};
    \draw (t.north east) to[in=180,out=0] (b.west.1);
    \draw ([yshift=1pt]b.east.1) to[in=180,out=0] ++(.5,.25) node[above] {};
    \draw ([yshift=-1pt]b.east.2) to ++(.5,0);
  \end{pic} \ = \  
\begin{pic}
    \node[state,minimum width=1cm] (r) at (0,0){$r$};
    \node[box=0/2/0/1, minimum height=5mm] (f) at (1,.23) {f};
    \draw (r.south east) to ++(1.13,0) to[in=180,out=0] ++(.52,-.47) to[in=180,out=0] ++(.3,-.33);
    \draw (r.north east) to (f.west.1);
    \draw ([yshift=1pt]f.east.1) to ++(.8, 0);
    \draw ([yshift=-1pt]f.east.2) to ++(.5,0) to[in=180,out=0] ++(.3,-.295);
    \node[state,minimum width=1cm] (t) at (0,-1.1){$t$};
    \node[box=0/2/0/1, minimum height=5mm] (g) at (1,-0.87) {g};
    \draw (t.south east) to ++(1.95,0);
    \draw (t.north east) to (g.west.1);
    \draw ([yshift=1pt]g.east.1) to[in=180,out=0] ++(.5, .475)  to[in=180,out=0] ++(.3,.295);
    \draw ([yshift=-1pt]g.east.2) to ++(0.5,0) to[in=180,out=0] ++(.3,.33);
  \end{pic}  \ \approx \ 
\begin{pic}
  \node[state,minimum width=1cm] (s) at (0,0){$s$};
  \node[box=0/2/0/1, minimum height=5mm] (a) at (1,-.23) {a};
  \draw (s.north east) to ++(1.95,0);
  \draw (s.south east) to (a.west.1);
  \draw ([yshift= 1pt]a.east.1)
    to[out=-10, in=190] ++(.8,-.8);  
  \draw ([yshift=-1pt]a.east.2)
    to[out=-20, in=200] ++(.8,-.8);
  \node[state,minimum width=1cm] (u) at (0,-1.1){$u$};
  \node[box=0/2/0/1, minimum height=5mm] (b) at (1,-1.33) {b};
  \draw (u.north east)
    to[out=0, in=180] ++(1.12,0)  
    to[out=0, in=170] ++(.8,.8); 
  \draw (u.south east) to (b.west.1);
  \draw ([yshift= 1pt]b.east.1)
    to[out=10, in=-170] ++(.8,.8);  
  \draw ([yshift=-1pt]b.east.2)
    to[out=0, in=180] ++(.8,0);    
\end{pic}
\]
The first equation is true in any monoidal category, and the second equation follows from security of $f$ and $g$ and from $\approx$ respecting composition.  

It is easy to show that swaps, and in fact all isomorphisms, are secure. Moreover, as our SMC structure is inherited from the SMC $\CB$, the further required properties of an SMC  automatically hold. 
\end{proof}

We now move on to a categorical counterpart of the composition theorem of UC (for static systems)~\cite[Theorem 3]{Can20}. 
\begin{theorem}[Categorical universal composition]
\label{thm:categoricalUCcomposition}
Let $r,s \colon I\to X$ be resources. Assume that the identity on $X$ gives a secure emulation $r\to s$. Then any $f \colon X\to Y$ in $\DR$ 
 gives a secure emulation from $r$ to $f\circ s$.
\end{theorem}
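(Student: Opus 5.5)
The plan is to reduce everything to Theorem~\ref{thm:categoricalcompletenessofdummy}, so that only one simulator has to be exhibited. First I fix the bookkeeping.

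By definition of composition in $\CB$, the resource $f\circ s$ is the class $[B_f\otimes B_s,\,(f\otimes\id[B_s])\circ s]$, so its backdoor type is $B_f\otimes B_s$. It is a genuine resource, i.e.\ a morphism of $\DB$, because $f\in\DR\subseteq\DB$, $s\in\DB$, and $\DB$ is closed under composition. The identity on $X$ in $\CB$ is $[I,\id[X]]$, so its backdoor is trivial. By the well-definedness proposition for security, I may work with these chosen representatives.

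Next I extract a simulator from the hypothesis. The identity is a secure emulation $r\to s$, so I instantiate Definition~\ref{def:security} with the adversary $a=\id[B_r]\colon I\otimes B_r\to B_r$. This gives some $b'\colon B_s\to B_r$ in $\CC$ with
\[
r \;\approx\; (\id[X]\otimes b')\circ s .
\]

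Then I push $f$ through this equivalence. Applying $f\otimes\id[B_r]$ to both sides and using clause~1 of Definition~\ref{def:indistinguishability} gives
\[
(f\otimes\id[B_r])\circ r \;\approx\; (f\otimes\id[B_r])\circ(\id[X]\otimes b')\circ s .
\]
By the interchange law, the right-hand side equals
\[
(\id[Y\otimes B_f]\otimes b')\circ\bigl((f\otimes\id[B_s])\circ s\bigr),
\]
which is $(\id[Y]\otimes b)\circ(f\circ s)$ with $b:=\id[B_f]\otimes b'\colon B_f\otimes B_s\to B_f\otimes B_r$. Diagrammatically, this step just slides the box $b'$ past $f$ along the separate backdoor wire.

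This is exactly equation~\eqref{eq:UCsecuritywithdummy} for $f\colon r\to f\circ s$. Theorem~\ref{thm:categoricalcompletenessofdummy} then yields that $f$ is a secure emulation, as required.

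The only real obstacle is the bookkeeping: checking that the backdoor wires appear in the order $Y\otimes B_f\otimes B_r$ on the left and $Y\otimes B_f\otimes B_s$ on the right, so that $b$ has exactly the type demanded by Theorem~\ref{thm:categoricalcompletenessofdummy}. I do not expect this to need any symmetries, since the convention for composition in $\CB$ already places $B_f$ before $B_s$. If some other convention produced a different order, a swap could be absorbed into $b$, or handled by passing to an equivalent representative.
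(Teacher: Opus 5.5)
Your proposal is correct and matches the paper's proof. Like the paper, you extract a simulator $b'$ with $r \approx (\id[X]\otimes b')\circ s$ from the hypothesis, apply $f\otimes\id[B_r]$ using that $\approx$ respects composition, slide $b'$ past $f$ by interchange, and conclude via Theorem~\ref{thm:categoricalcompletenessofdummy}. Your explicit use of the adversary $\id[B_r]$ and of the resulting simulator $\id[B_f]\otimes b'$ on the backdoor $B_f\otimes B_s$ just spells out bookkeeping that the paper leaves implicit.
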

\begin{proof}
By assumption there exists a simulator $b$ such that 
   \[ \begin{pic}
    \node[state,minimum width=1cm] (r) at (0,0){$r$};
    \draw (r.north east) to ++(.6,0) node[above] {$X$};
    \draw (r.south east) to ++(.6,0) node[below] {$B_r$};
  \end{pic} \enspace \approx  \enspace \begin{pic}
    \node[state,minimum width=1cm] (r) at (0,0){$s$};
    \node[box=0/1/0/1, minimum height=5mm,minimum width=4mm] (b) at (1.1,-.23) {b};
    \draw (r.north east) to ++(1.7,0) node[above] {$X$};
    \draw (r.south east) to (b.west.1);
    \draw (b.east.1) to ++(.5, 0) node[below] {$B_r$}; 
  \end{pic}.\]
It then follows that 
\[ \begin{pic}
    \node[state,minimum width=1cm] (r) at (0,0){$r$};
    \node[box=0/2/0/1, minimum height=5mm,minimum width=4mm] (a) at (.7,.23) {f};
    \draw (r.south east) to ++(1.1,0) node[right] {$B_r$};
    \draw (r.north east) to (a.west.1);
    \draw ([yshift=1pt]a.east.1) to ++(.35, 0) node[right] {$Y$};
    \draw ([yshift=-1pt]a.east.2) to ++(.35,0) node[right] {$B_f$};
  \end{pic} \approx  \ \begin{pic}
    \node[state,minimum width=1cm] (r) at (0,0){$s$};
    \node[box=0/1/0/1, minimum height=5mm,minimum width=4mm] (b) at (.7,-.23) {b};
    \node[box=0/2/0/1, minimum height=5mm,minimum width=4mm] (a) at (1.4,.23) {f};
    \draw (r.north east) to (a.west.1);
    \draw (r.south east) to (b.west.1);
    \draw (b.east.1) to ++(1.05, 0) node[right] {$B_r$};
    \draw ([yshift=1pt]a.east.1) to ++(.35, 0) node[right] {$Y$};
    \draw ([yshift=-1pt]a.east.2) to ++(.35,0) node[right] {$B_f$};
  \end{pic} =  \begin{pic}
    \node[state,minimum width=1cm] (r) at (0,0){$s$};
    \node[box=0/1/0/1, minimum height=3mm,minimum width=4mm] (b) at (1.4,-.23) {b};
    \node[box=0/2/0/1, minimum height=5mm,minimum width=4mm] (a) at (.7,.23) {f};
    \draw (r.north east) to (a.west.1);
    \draw (r.south east) to (b.west.1);
    \draw (b.east.1) to ++(.35, 0) node[right] {$B_r$};
    \draw ([yshift=.5pt]a.east.1) to ++(1.05, 0) node[right] {$Y$};
    \draw ([yshift=-.5pt]a.east.2) to ++(1.05,0) node[right] {$B_f$};
  \end{pic}\]
where the first equation follows from the previous equation and indistinguishability respecting composition, and the second from the axioms of an SMC. 
We have now proved the result for the dummy adversary, which is sufficient by Theorem~\ref{thm:categoricalcompletenessofdummy}. 
\end{proof} 

\subsection{Global subroutines}

We now observe that working with shared ``global'' resources (\eg a shared clock or a shared PKI) already falls under the theory above. Indeed, a ``resource'' $r$ potentially calling (and therefore connected to) a global resource $\gamma \colon I\to Y \otimes X$ would be modelled as a morphism $r \colon X\to W$, and the resulting total resource $(\id \otimes r)\circ \gamma$ would be depicted as 
\[\begin{pic}
    \node[state,minimum width=1.25cm] (r) at (0,0){$\gamma$};
    \node[box=0/2/0/1, minimum height=5mm] (a) at (1,0) {r};
    \draw ([yshift=-5pt]r.south east) to ++(1.65,0) node[below] {};
    \draw ([yshift=5pt]r.north east) to ++(1.65,0) node[below] {};
    \draw (r.east) to (a.west.1);
    \draw ([yshift=1pt]a.east.1) to ++(.5, 0) node[above] {};
    \draw ([yshift=-1pt]a.east.2) to ++(.5,0);
  \end{pic} \]
where the top wire represents the (in general composite) remaining interface $Y$ of $\gamma$, the one below it is the honest interface of $r$, and the bottom two wires give the interfaces of the adversary. We can then define secure emulations $r\to s$ in the presence of a global  resource $\gamma$ as  ordinary secure emulations  from $r$ with $\gamma$ to $s$ with $\gamma$.
\begin{definition}
    A secure emulation $r\to s$ \emph{with a global resource $\gamma$} is a secure emulation $(\id \otimes r)\circ\gamma \to(\id \otimes s )\circ\gamma$.
\end{definition}
We now observe that the Universal Composition with Global Subroutines (UCGS) Theorem~\cite{CDP+:globalUC} for static systems can be captured as a special case of Theorem~\ref{thm:categoricalUCcomposition}, following~\cite{BCH+:globalwithUC}.
\begin{theorem}[Categorical universal composition with global subroutines]
\label{thm:categoricalUCcompositionglobal} 
    Assume that the identity map gives a secure emulation from $r \colon X\to W$ with global resource $\gamma$ to $s \colon X\to W$ with $\gamma$. Then any $f \colon Y \otimes W\to W'$ in $\DR$ gives a secure emulation from $(\id \otimes r )\circ\gamma$ (\ie from $r$ with $\gamma$) to 
\[\begin{pic}
    \node[state,minimum width=1.5cm] (g) at (0,0){$\gamma$};
    \node[box=0/2/0/1, minimum height=5mm] (s) at (1,0) {s};
    \node[box=0/2/0/2, minimum height=7mm] (f) at (2,.33) {f};
    \draw ([yshift=-5pt]g.south east) to ++(2.5,0) node[below] {};
    \draw ([yshift=5pt]g.north east) to (f.west.1);
    \draw (g.east) to (s.west.1);
    \draw ([yshift=.6pt]s.east.1) to (f.west.2);
    \draw ([yshift=-.6pt]s.east.2) to ++(1.5,0);
    \draw (f.east.1) to ++(.5, 0) node[above] {};
    \draw (f.east.2) to ++(.5,0);
  \end{pic} \]
 \ie to the resource $f\circ (\id \otimes s)\circ\gamma$. 
\end{theorem}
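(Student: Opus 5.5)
The plan is to reduce this statement to Theorem~\ref{thm:categoricalUCcomposition}, applied to the two total resources. Set $R:=(\id[Y]\otimes r)\circ\gamma$ and $S:=(\id[Y]\otimes s)\circ\gamma$, both viewed as resources $I\to Y\otimes W$ in $\DB$. Concretely, $R$ is the composite in $\CB$ of $\gamma$ with $\id[Y]\otimes r$, so it is a morphism of $\DB$: $\gamma$ is a resource of $\DB$, $r$ is a morphism of $\DB$, and $\DB$ is a sub-SMC, which makes it closed under $\otimes$ and $\circ$. The same argument shows $S$ lies in $\DB$. With this notation, the hypothesis ``the identity map gives a secure emulation from $r$ with $\gamma$ to $s$ with $\gamma$'' says, by the Definition of secure emulation with a global resource, exactly that $\id[Y\otimes W]$ is a secure emulation $R\to S$.

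Next I apply Theorem~\ref{thm:categoricalUCcomposition} with its $X$ taken to be $Y\otimes W$ and its $Y$ taken to be $W'$. The morphism $f\colon Y\otimes W\to W'$ lies in $\DR$ by assumption. The theorem then gives that $f$ is a secure emulation from $R$ to $f\circ S$.

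Finally, I check that $f\circ S$ is the resource in the statement. In $\CB$, $f\circ(\id[Y]\otimes s)\circ\gamma$ is the composite along the honest wires with the backdoors accumulated. Its representative in $\CC$ is the pictured diagram: the backdoor of $\gamma$ (bottom wire), the backdoor of $s$, and the backdoor of $f$ are collected, and the $Y$-wire of $\gamma$ together with the $W$-output of $s$ feed into $f$. Any reordering of these backdoor wires is an isomorphism on the adversarial port, so by~\eqref{pic:equivalencerel} the two representatives are equivalent, and security is well-defined by the Proposition.

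The only step I expect to need any care is this last identification. There I must unfold the definitions of $\circ$ and $\otimes$ in $\CB$, the latter including the swap that collects backdoors, and confirm that the result agrees with the drawn diagram up to a swap on backdoor wires. This is routine string-diagram bookkeeping, since only connectivity matters. Everything else is a direct invocation of Theorem~\ref{thm:categoricalUCcomposition}.
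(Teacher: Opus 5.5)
Your proposal is correct and follows the paper's route. The paper presents this theorem as a direct special case of Theorem~\ref{thm:categoricalUCcomposition}, applied to the total resources $(\id\otimes r)\circ\gamma$ and $(\id\otimes s)\circ\gamma$ on $Y\otimes W$, exactly as you do. Your extra bookkeeping is also correct: the membership check in $\DB$ and the identification of the picture with $f\circ(\id\otimes s)\circ\gamma$ up to a permutation of backdoor wires only make explicit what the paper leaves implicit.
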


In Section~\ref{ssec:translatingtoUC} we discuss how the above theorem can be used to derive the UCGS theorem of~\cite{BCH+:globalwithUC} as a corollary.

%% file: 4-uc-concrete.tex
\section{UC, Concretely}
\label{sec:UCconcrete}

One could instantiate the theory developed in the previous section with any of the example SMCs discussed so far. However, these are unlikely to yield cryptographically meaningful theories. Instead, one can extract SMCs fairly directly from (minor variants of) state-separating proofs~\cite{BDF+18}, random systems~\cite{M02}, and perhaps with more effort from elsewhere in cryptography. In general, finding cryptographically meaningful categories off-the-shelf is difficult (cf.~\cite[Section~9]{BK23}). 

As a result, one might want to build this SMC directly. This is particularly likely to be useful if one has some notion of an atomic ``interactive computational system,'' which when composed results in \emph{a network} of such systems (\eg two ITMs can be connected to form a network of two ITMs, rather than a single ITM)\footnote{Whether or not the composed network can be simulated by a single ITM is beside the point here.} instead of, say, how two functions compose into a single function. 

We can conveniently build such categories by means of generators and equations as described in Section~\ref{ssec:gensandrels}. When describing the resulting category, it is useful to borrow intuitions from UC. In this viewpoint, the generating objects  $\OO$ correspond to basic types of communication channels. As a first approximation, UC has just one basic type, but one could imagine communication channels parametrized by the type of data flowing on them (so that a channel for bit strings is different from a channel for communicating vectors, graphs, or elements of a group), or perhaps each channel could have a session type~\cite{bartoletti_combining_2015}, capturing the expected communication pattern (``first I send you a vertex of a graph, then you respond with an integer, giving its color'' etc.). Given two words $V,W$, one then thinks of the generating set $\MM_{V,W}$ as the set of all atomic/basic computational systems with channels $V$ on one side of the system and channels $W$ on the other. For example, in our formalization of UC these will correspond to interactive Turing machines with a given number of input tapes and a given number of subroutine-output tapes. One may also add further generators and equations if need be. We note that the backdoor tape is not modeled in this base category and will be added later on. 

\subsection{The starting category for UC}
\label{ssec:baseforUC}

We will now describe how to build, via generators and relations, an SMC whose morphisms are networks of interactive Turing machines (ITMs). We will fix details of our machines as needed; for now, it is sufficient to assume that in addition to some internal tapes each given ITM has some number $n$ of tapes for receiving inputs from other ITMs (from now on, input tapes), and some number $m$ of tapes for receiving subroutine-outputs from other ITMs (from now on, subroutine-output tapes or just output tapes). We assume that these tapes are enumerated, so that one can always speak without ambiguity of the $i$th input tape (or the $j$th subroutine-output tape). Connections between ITMs are then obtained by pairing the $i$th input tape of one machine with the $j$th subroutine-output tape of another, letting each machine send messages that are received by the other machine at the corresponding tape. (No tape gets paired twice.) Thus, as in UC, the tapes are only used for receiving messages. However, we differ from UC in that we allow multiple subroutine-output and input tapes, instead of one each. We will return to this in Section~\ref{ssec:translatingtoUC}.

Formally, this pairing of tapes takes place in the SMC that we will build, with its operational interpretation given in Section~\ref{ssec:indistinguishability}. However, the picture to keep in mind is that when such a pairing is done, one machine can receive inputs from the other machine on its $i$th input tape, and can send outputs to the other machine's $j$th subroutine-output tape.

As a first approximation, UC has only one type of wire, representing an untyped communication channel. However, the default kind of communication between ITMs in UC is never between equals: given two communicating machines, one is always designated to be the ``caller'' and the other the ``subroutine.'' While formally UC does not enforce a difference between these two roles, they are often used in practice. 
We will thus set $\OO:=\{A_\rightarrow,A_\leftarrow\}$, with $A_\rightarrow$ representing a connection with the caller on the left and $A_\leftarrow$ representing a connection with the caller on the right. Instead of labeling the diagrams with $A_\rightarrow$ and $A_\leftarrow$ for these two objects, we will simply draw corresponding identity morphisms by an arrow pointing away from the caller: 
\[
\begin{pic}
    \draw (0,0) to ++(.5,0) node {$>$} to ++(.5,0);
\end{pic} 
\enspace \text{for $\id[A_\rightarrow]$ \quad and \quad } 
\enspace 
\begin{pic}
    \draw (0,0) to ++(.5,0) node {$<$} to ++(.5,0);
\end{pic} \enspace \text{for $\id[A_\leftarrow]$}
\] 
We define $\MM_{A_\leftarrow^m,A_\leftarrow^n}$ to consist of the set of ITMs with $m$ subroutine-output tapes and $n$ input tapes. We will explain shortly why we do not include words containing $A_\rightarrow$.

If these were all the generators we had, we could only build DAGs of ITMs whereas UC imposes no restrictions on the shape of connectivity. For example, one could have a cycle of calls, consisting of three machines $f,g,h$, with $h$ calling $g$, $g$ calling $f$, and $f$ calling $h$, resulting in the diagram 
\[\begin{pic}
\node[box=0/1/0/1] (f) at (0,0) {f};
\node[box=0/1/0/1] (g) at (1.5,0) {g};
\node[box=0/1/0/1] (h) at (3,0) {h};
\draw (f.east) to  node[midway] {$<$}  (g.west);
\draw (g.east) to  node[midway] {$<$}  (h.west);
\draw (h.east) to ++(.3, 0) to[in=0,out=0] ++(0,.6) to  node[midway] {$>$}  ++(-4.32,0) to[in=180,out=180] ++(0,-.6) to (f.west);
\end{pic}\] 

To make sense of this, we need the ability to achieve such connectivity.  Formally, we add two additional generating morphisms 
by setting $\MM_{I,A_\leftarrow A_\rightarrow}=\{\eta\}$ and $\MM_{A_\rightarrow A_\leftarrow, I}=\{\epsilon\}$
and the defining equations 
\begin{align*}
  (\epsilon \otimes \id[A_{\rightarrow}]) \circ (\id[A_{\rightarrow}] \otimes \eta)
  &= \id[A_{\rightarrow}]
 \\
  (\id[A_{\leftarrow}] \otimes \epsilon) \circ (\eta \otimes \id[A_{\leftarrow}])
  &=\id[A_{\leftarrow}]~.
\end{align*}
We first indulge in some syntactic sugar by denoting
\[ 
      \begin{pic}
    \node[state,minimum width=10mm] (r) at (0,0){$\eta$};
    \draw (r.north east) to ++(.3,0) node {$<$} to ++(.3,0);
    \draw (r.south east) to ++(.3,0) node {$>$} to ++(.3,0);
  \end{pic} \enspace \text{as } \enspace 
  \begin{pic}[scale=.6]
    \draw (0,0) to++(-.2,0) node {$>$} to[out=180,in=-90]  ++(-.5,.5) to[out=90,in=180] ++(.5,.5) node {$<$} to ++(.2,0);
  \end{pic}\enspace \quad\text{and}\quad \enspace 
      \begin{pic}
    \node[costate,minimum width=10mm] (r) at (0,0){$\epsilon$};
    \draw (r.north west) to ++(-.3,0) node {$>$} to ++(-.3,0);
    \draw (r.south west) to ++(-.3,0) node {$<$} to ++(-.3,0);
  \end{pic}  \enspace \text{as }
  \begin{pic}[scale=.6]
    \draw (0,0) to ++(.2,0) node {$<$}  to[out=0,in=-90] ++(.5,.5) to[out=90,in=0] ++(-.5,.5) node {$>$}  to ++(-.2,0);
  \end{pic} 
  \]
The two defining equations above then correspond to the following \emph{snake equations}:
\[ 
   \begin{pic}[scale=.6]
    \draw (0,0) to ++(-.5,0) node {$>$} to++(-.5,0) to[out=180,in=-90] ++(-.5,.5)   to[out=90,in=180] ++(.5,.5) node {$<$} to[out=0,in=-90] ++(.5,.5) to[out=90,in=0] ++(-.5,.5)  to ++(-.5,0) node {$>$} to++(-.5,0);
  \end{pic}  \enspace = \enspace
  \begin{pic}[baseline=-.09cm,scale=.6]
  \draw (0,0)  to (1,0) node {$>$}  to (2,0);
  \end{pic} \enspace \text{and} \enspace 
  \begin{pic}[scale=.6]
    \draw (0,0)  to ++(.5,0) node {$<$} to ++(.5,0) to[out=0,in=-90] ++(.5,.5) to[out=90,in=0] ++(-.5,.5)  node {$>$} to[out=180,in=-90] ++(-.5,.5) to[out=90,in=180] ++(.5,.5)  to ++(.5,0) node {$<$} to ++(.5,0);
  \end{pic}  \enspace = \enspace
  \begin{pic}[baseline=-.09cm,scale=.6]
  \draw (0,0) to (1,0) node {$<$} to (2,0);
  \end{pic}
  \] 
We set the remaining generating sets $\MM_{V,W}$ to be empty for all other words $V,W$. This results in our base SMC $\CC$. These \emph{cups} and \emph{caps} ($\eta$ and $\epsilon$) are added to remain faithful to simple UC, where connecting machines in a cycle is allowed. Moreover, the cups and caps let us build morphisms $V\to W$ even when the generating set $\MM_{V,W}$ is empty. However, the ability to connect machines cyclically is rarely used in cryptographic practice, as DAGs of ITMs are usually sufficient. 

\subsection{Protocols, hybrids, and the adversary in UC}

Our starting category $\CC$ yields the associated SMC $\CB$ as in Section~\ref{ssec:abstractadversary}. However, we now need to cut down from $\CB$ to certain sub-SMCs in order to model UC faithfully. We will introduce nested sub-SMCs $ \DB \supseteq  \DR$ of $\CB$. 

In each of these the objects are restricted words in the generating object $A_\leftarrow$ of $\CC$ (instead of words in $\{A_\rightarrow,A_\leftarrow\}$) and morphisms $[B_f,f]$ will have the same restriction on the object $B_f$: intuitively, this is to ensure that states in the resulting category are given by networks of ITMs that are subroutines of the environment, ruling out states that also have machines calling the environment. In other words, every object is of the form $A_\leftarrow^n$ for some $n\in\NN$, and every state on $A_\leftarrow^n$ will be an equivalence class of the form $[A_\leftarrow^m,r \colon I\to A_\leftarrow^n \otimes A_\leftarrow^m]$.  The interpretation is that $m$ is the number of backdoors (which will coincide with the number of machines in $r$) and $n$ the number of 
outgoing (honest) connections. As the other generating object is not referenced from now on, we will simply write $A$ instead of $A_\leftarrow$ and draw wires without annotating them by arrows. Moreover, we will impose the following additional restrictions on all morphisms $[B_f,f]$.
\begin{itemize}
\item For $\DB$, we want to ensure that every machine that is part of a network  has exactly one tape for the adversary. This is achieved by first setting $\cat{D'_{bd}}$  to be the sub-SMC of $\CB$ generated by (\ie the smallest sub-SMC containing) cups and caps and 
all morphisms of the form $[A,f \colon A^m\to A^n \otimes A]$, where $f$ consists of a single ITM with $m$ subroutine-output tapes and $n+1$ input tapes. This implements the restriction to one backdoor tape per machine. However, we also want to ensure that all backdoors and interfaces are oriented similarly, so that resources in the resulting category only expect inputs from the environment (and the adversary) instead of calling them. This is achieved by defining $\DB$ to be the sub-SMC of $\cat{D'_{bd}}$ containing all objects of the form $A^n$ and all morphisms $[A^m,f \colon A^n\to A^{k} \otimes A^m]$. 

\item For $\DR$ we additionally require that all of the allowed ITMs behave in some fixed prescribed  way (\eg giving full control and all of its history to the adversary) when instructed by the adversary. 
For example, in the case of full corruptions, this is achieved by first setting $\cat{D'_{real}}$ to be the sub-SMC of $\CB$ generated by  cups and caps and  all morphisms of the form $[A,f \colon A^m\to A^n \otimes A]$, where $f$ consists of a single ITM with $m$ subroutine-output tapes and $n+1$ input tapes and \emph{moreover}, when instructed to do so at the $(n+1)$st tape, $f$ gives full control and leaks its current state to the machine it is connected to.
We then obtain $\DR$ by defining it to be the sub-SMC of $\cat{D'_{real}}$ consisting of all objects of the form $A^n$ and all morphisms $[A^m,f \colon A^n\to A^{k} \otimes A^m]$ of  $\cat{D'_{real}}$. 
\end{itemize} 
Finer details on corruptions can be layered as additional conventions as in UC. For instance, to ensure that corruptions are (suitably) commensurate between the real and ideal, we might require that the environment learns some fixed function of the set of corrupted identities (\eg which parties are corrupted). 

\subsection{The execution model and indistinguishability}
\label{ssec:indistinguishability}

We now fill in the details of computational indistinguishability.
First, we spell out some further details of ITMs. 
\begin{itemize}
    \item First of all, we allow our ITMs to be probabilistic, which can be modeled by each of them having access to an internal randomness tape, containing an infinite sequence of uniformly random  bits. 
    \item Machines are efficient in a suitable sense. 
    Generally speaking, choosing a good notion of efficiency is surprisingly subtle~\cite{HUM12}. For us, what matters is that there is a notion of efficiency for machines such that any network of efficient machines can be replaced by an efficient machine emulating the entire network. This justifies allowing the adversary to consist of multiple machines. This property is only required for comparison with UC. 
    For Theorem~\ref{thm:translation} to hold, our notion of efficiency needs to also agree with that of UC (for static systems), which in turn also relies on the above.
\end{itemize}

We now define, for each object $X$ of $\CC$, the equivalence relation of computational indistinguishability on states $I\to X$ of $\CC$. First, we define \emph{an environment} of type $X$ to be a morphism $\E \colon X \to A$ in $\CC$, \ie a network of ITMs with one input tape on the right and subroutine-output tapes given by $X$ on the left with some further constraints explained below. 

Now, given a state $r \colon I \to X$ in $\CC$, we feed this state into the environment, resulting in $\E \circ r$:
   \[ \begin{pic}
    \node[state,minimum width=8mm] (r) at (0,0){$r$};
    \node[box=0/1/0/1, minimum height=8mm] (e) at (1,0) {\E};
    \draw (r.east) to (e.west.1);
    \draw (e.east.1) to ++(.5, 0) node[above] {$A$};
  \end{pic}\]
We will now explain how this is to be executed when $\E$ is given an initial input $z$, following the single-threaded execution model of UC.

Initially, the environment $\E$ receives the input $z$ (whose length is interpreted as the security parameter) on the machine to which the only input tape of $\E$ belongs, the \emph{main machine of $\E$}, which is then activated. After that, the machines take turns being executed as follows: the currently active machine is executed until it either sends a message to another machine, after which the message is written on the corresponding tape of the recipient machine (\ie inputs get written on the input tape and outputs get written on the output tape) which now becomes active, or the machine halts without sending a message. If the machine that halted was not the main machine of $\E$, we activate the main machine of $\E$ again. If it was the main machine of  $\E$, the whole execution terminates. We also assume that the main machine of $\E$ never halts without sending a message, whether as input to one of the machines to its left or as a decision bit $b$ written to subroutine-output on its right. Thus the computation goes back and forth between $\E$ and $r$ until $\E$ concludes by returning a bit $b\in\{0,1\}$. 

For any state $r$, input $z$ and $\E$, we thus obtain a probability distribution $\exec_{r,\E}(z)$ on $\{0,1\}$, where the randomness comes from internal randomness used by $\E$ and $r$. We thus obtain a probability ensemble $\{\exec_{r,\E}(z)\}_{z\in\{0,1\}^*}$, which we denote by $\exec_{r,\E}$. We say that $r,s \colon I\to X$ in $\CC$ are \emph{computationally indistinguishable}, denoted $r\approx s$, if for any environment $\E$ of type $X$, the (statistical) difference between the ensembles $\exec_{r,\E}$ and $\exec_{s,\E}$ (\ie its distinguishing advantage) is negligible in $\left\vert z\right\vert$. Note that when we say $r$ and $s$ are indistinguishable this in particular means that $r$ and $s$ have the same codomain (and thus have the same type). Note also that in our notation the adversary (resp., simulator) is absorbed into $r$ (resp., $s$). 

We next observe that computational indistinguishability respects composition \ie satisfies Definition~\ref{def:indistinguishability}. This lemma performs, once and for all, the absorption of parts of the network into the environment, which in particular justifies the practice of not drawing the environment in diagrams.

\begin{lemma}\label{lem:indistinguishability}
Computational indistinguishability respects (sequential and parallel) composition. 
\end{lemma}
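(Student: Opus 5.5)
The plan is to prove the two clauses of Definition~\ref{def:indistinguishability} separately. The single idea behind both is that any part of the network sitting between the state and the environment can be absorbed into the environment. So the sequential clause is the real content, and the parallel clause should follow from it by a hybrid argument.

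\emph{Sequential composition.} Let $r\approx s\colon I\to X$, let $f\colon X\to Y$ be a morphism of $\CC$, and let $\E\colon Y\to A$ be an arbitrary environment of type $Y$. Set $\E':=\E\circ f\colon X\to A$.

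First I will check that $\E'$ is an environment of type $X$. It is a network of ITMs with a single input tape on the right, namely that of the main machine of $\E$, and with subroutine-output tapes given by $X$. We take its main machine to be the main machine of $\E$. That machine never halts silently, since this is a property of $\E$ alone. Efficiency is preserved because a network of efficient machines is efficient.

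Next I will argue that $\exec_{f\circ r,\E}(z)$ and $\exec_{r,\E'}(z)$ are the same distribution. By associativity, $\E\circ(f\circ r)=(\E\circ f)\circ r$ is literally the same morphism of $\CC$. The execution model is described only in terms of the underlying network: which machines exist, which tapes are paired, and which machine is main. Both readings designate the same main machine. The rule ``if a non-main machine halts, reactivate the main machine of $\E$'' therefore gives identical runs. Hence the distinguishing advantage of $\E$ between $f\circ r$ and $f\circ s$ equals that of $\E'$ between $r$ and $s$, which is negligible.

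\emph{Parallel composition.} Suppose $r_i\approx s_i$ for $i=1,2$. I will write $r_1\otimes r_2=(\id[X_1]\otimes r_2)\circ r_1$ and $s_1\otimes r_2=(\id[X_1]\otimes r_2)\circ s_1$, using the interchange law and unitality. Here $\id[X_1]\otimes r_2\colon X_1\to X_1\otimes X_2$ is a morphism of $\CC$, so the sequential clause gives $r_1\otimes r_2\approx s_1\otimes r_2$. Symmetrically, $s_1\otimes r_2=(s_1\otimes\id[X_2])\circ r_2\approx(s_1\otimes\id[X_2])\circ s_2=s_1\otimes s_2$. The conclusion then follows from transitivity of $\approx$, since a sum of two negligible functions is negligible.

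\emph{Main obstacle.} The step I expect to be delicate is justifying that $\exec_{t,\E}$ is well defined on morphisms of $\CC$. Morphisms are equivalence classes of string diagrams modulo the SMC axioms and the snake equations, so I need the execution to be invariant under these equations. My approach is to show that every generating equation preserves the induced network. By this I mean the set of ITMs together with the pairing of input tapes with subroutine-output tapes, where a wire passing through cups, caps or swaps simply pairs its two endpoint tapes. Swaps, identities and yanked snakes do not change which tapes are paired, so the induced network, and hence the execution, is invariant. I would also double-check the side conditions on environments (all interfaces oriented as $A_\leftarrow$, exactly one input tape on the right) for $\E\circ f$, including the case where $f$ contains cups and caps.
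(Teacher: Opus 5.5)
Your proof is correct and follows the paper's argument. The sequential clause absorbs $f$ into the environment via associativity, so that $\E\circ f$ is again an environment with the same main machine; the parallel clause is the same two-step hybrid the paper draws, absorbing $r_2$ and then $s_1$ into the environment. Your additional check that the execution is invariant under the SMC and snake equations is left implicit in the paper; it is a reasonable supplement, not a change of approach.
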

\begin{proof}
Let us overload notation as follows: given $r,s \colon I\to X$ and an environment $\E$ of the correct type, we will write 
   \[ \begin{pic}
    \node[state,minimum width=8mm] (r) at (0,0){$r$};
    \node[box=0/1/0/1, minimum height=8mm] (e) at (1,0) {\E};
    \draw (r.east) to (e.west.1);
    \draw (e.east.1) to ++(.6, 0) node[above] {$A$};
  \end{pic}\enspace \approx \enspace
  \begin{pic}
    \node[state,minimum width=8mm] (r) at (0,0){$s$};
    \node[box=0/1/0/1, minimum height=8mm] (e) at (1,0) {\E};
    \draw (r.east) to (e.west.1);
    \draw (e.east.1) to ++(.6, 0) node[above] {$A$};
  \end{pic}\] to mean that the (statistical) difference between the ensembles $\exec_{r,\E}$ and $\exec_{s,\E}$  is negligible. 
With this notation, the proof is just a matter of absorbing parts into the environment, indicated by dashed lines. 

Let $r,s \colon I\to X$ be states in $\CC$, and let $f \colon X\to Y$ be a morphism in $\CC$. Let us assume $r\approx s$. To show that $f\circ r\approx f\circ  s$, let $\E$ be an arbitrary environment and compute as follows:
   \[ \begin{pic}
    \node[state,minimum width=8mm] (r) at (0,0){$r$};
    \node[box=0/1/0/1, minimum height=8mm] (f) at (1,0) {f};
    \node[box=0/1/0/1, minimum height=8mm] (e) at (2,0) {\E};
    \draw (r.east) to (f.west.1);
    \draw (f.east.1) to (e.west.1);
    \draw (e.east.1) to ++(.6, 0) node[above] {};
    \coordinate (extra) at ($(e)+(0.4,0)$);
    \node[fit=(f) (e) (extra), draw, dashed] (bigbox) {};
  \end{pic}\enspace \approx \enspace
  \begin{pic}
    \node[state,minimum width=8mm] (r) at (0,0){$s$};
    \node[box=0/1/0/1, minimum height=8mm] (f) at (1,0) {f};
    \node[box=0/1/0/1, minimum height=8mm] (e) at (2,0) {\E};
    \draw (r.east) to (f.west.1);
    \draw (f.east.1) to (e.west.1);
    \draw (e.east.1) to ++(.6, 0) node[above] {};
    \coordinate (extra) at ($(e)+(0.4,0)$);
    \node[fit=(f) (extra) (e), draw, dashed] (bigbox) {};
  \end{pic}\]
where the (implicit) step of adding a dashed area corresponds 
to $\E\circ f$ being a valid environment and follows from associativity of composition, and the displayed step follows from the assumption that $r\approx s$. 

For parallel composition, let $r_i,s_i \colon I\to X_i$ be states in $\CC$ for $i=1,2$ such that $r_i\approx s_i$. Let $\E$ be arbitrary. We then reason as follows:
\begin{align*} & \begin{pic}
    \node[state,minimum width=8mm] (r1) at (0,1.15) {$r_1$};
    \node[state,minimum width=8mm] (r2) at (0,0) {$r_2$};
    \node[box=0/1/0/2, minimum height=23mm] (e) at (1,.575) {\E};
    \draw (r1.east) to (e.west.1);
    \draw (r2.east) to (e.west.2);
    \draw (e.east.1) to ++(.3, 0) node[above] {};
  \end{pic}  =  
  \begin{pic}
    \node[state,minimum width=8mm] (r1) at (0,1.15) {$r_1$};
    \node[state,minimum width=8mm] (r2) at (.8,0) {$r_2$};
    \node[box=0/1/0/2, minimum height=23mm] (e) at (1.6,.575) {\E};
    \draw (r1.east) to (e.west.1);
    \draw (r2.east) to (e.west.2);
    \draw (e.east.1) to ++(.4, 0) node[above] {};
    \coordinate (extra) at ($(e)+(0.4,0)$);
    \node[fit=(r2) (e) (extra), draw, dashed] (bigbox) {};
  \end{pic}
    \approx  
  \begin{pic}
    \node[state,minimum width=8mm] (r1) at (0,1.15) {$s_1$};
    \node[state,minimum width=8mm] (r2) at (.8,0) {$r_2$};
    \node[box=0/1/0/2, minimum height=23mm] (e) at (1.6,.575) {\E};
    \draw (r1.east) to (e.west.1);
    \draw (r2.east) to (e.west.2);
    \draw (e.east.1) to ++(.4, 0) node[above] {};
    \coordinate (extra) at ($(e)+(0.4,0)$);
    \node[fit=(r2) (e) (extra), draw, dashed] (bigbox) {};
  \end{pic} = \\ 
    &\begin{pic}
    \node[state,minimum width=8mm] (r1) at (0.8,1.15) {$s_1$};
    \node[state,minimum width=8mm] (r2) at (0,0) {$r_2$};
    \node[box=0/1/0/2, minimum height=23mm] (e) at (1.6,.575) {\E};
    \draw (r1.east) to (e.west.1);
    \draw (r2.east) to (e.west.2);
    \draw (e.east.1) to ++(.4, 0) node[above] {};
    \coordinate (extra) at ($(e)+(0.4,0)$);
    \node[fit=(r1) (e) (extra), draw, dashed] (bigbox) {};
  \end{pic}  \approx
    \begin{pic}
    \node[state,minimum width=8mm] (r1) at (0.8,1.15) {$s_1$};
    \node[state,minimum width=8mm] (r2) at (0,0) {$s_2$};
    \node[box=0/1/0/2, minimum height=23mm] (e) at (1.6,.575) {\E};
    \draw (r1.east) to (e.west.1);
    \draw (r2.east) to (e.west.2);
    \draw (e.east.1) to ++(.4, 0) node[above] {};
    \coordinate (extra) at ($(e)+(0.4,0)$);
    \node[fit=(r1) (e) (extra), draw, dashed] (bigbox) {};
  \end{pic}
   = 
   \begin{pic}
    \node[state,minimum width=8mm] (r1) at (0,1.15) {$s_1$};
    \node[state,minimum width=8mm] (r2) at (0,0) {$s_2$};
    \node[box=0/1/0/2, minimum height=23mm] (e) at (1,.575) {\E};
    \draw (r1.east) to (e.west.1);
    \draw (r2.east) to (e.west.2);
    \draw (e.east.1) to ++(.3, 0) node[above] {};
  \end{pic}
  \end{align*}
The first, third and fifth steps follow from axioms of an SMC, the second by $r_1\approx s_1$ the fourth from $r_2\approx s_2$.  It follows that $r_1 \otimes r_2\approx s_1 \otimes s_2$. 
\end{proof}

\begin{remark}\label{rem:environmentasnetwork}
We now observe that nothing so far has formally required that networks of machines can be simulated by a single machine from the same class, and in particular Theorem~\ref{thm:categoricalUCcomposition} goes through without such an assumption.  This contrasts with a point made by Canetti~\cite{Can20}, who notes that ``the UC theorem is, in general, false in settings where systems of ITMs cannot be simulated on a single ITM from the same class.'' In our view, this is an artifact of requiring environments to be single machines:  the composition theorem holds without assuming that a network can be reduced to a single ITM of the same class, provided one allows environments to be networks of machines as they are then automatically closed under composition.    
\end{remark}

We conclude this section with a technical result in $\CC$, used to relate adversaries in the categorical setting to those of UC. 

For each $m$, let $\mux_m \colon A^m\to A$ be the ITM that relays each message arriving on its
$i$th subroutine output tape (on the left) as a subroutine  output to the machine it is connected to on the right, tagged with $i$, and relays each tagged message $(i,\msg)$ it receives on
its single input tape back along the $i$th wire (behaving in some fixed, efficient but otherwise arbitrary way on messages that are not appropriately tagged); let $\demux_m \colon A\to A^m$ be the same relay with the
roles of the inputs and subroutine-output exchanged.

\begin{lemma}[Completeness of the multiplexing adversary]\label{lem:muxcompleteness}
For every $m$ and every state $r \colon I\to Y \otimes A^m$ in $\CC$,
\[
\bigl(\id[Y] \otimes(\demux_m\circ\mux_m)\bigr)\circ r \;\approx\; r .
\]
Consequently, for resources $r,s$ and a morphism $f \colon r\to s$ of $\DR$ whose combined
backdoor is $A^m$, the adversary $\mux_m \colon A^m\to A$ is split mono on $f$ applied to $r$. 
\end{lemma}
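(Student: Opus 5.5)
We prove the first claim by a direct simulation argument at the level of executions, and then obtain the second claim by specialising the first.

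\emph{First claim.} Fix an arbitrary environment $\E \colon Y\otimes A^m\to A$. We compare the executions of $\E\circ(\id[Y]\otimes(\demux_m\circ\mux_m))\circ r$ and $\E\circ r$. In the former, every message $\msg$ that a machine of $r$ emits on its $i$th backdoor wire is received by $\mux_m$. It is forwarded as $(i,\msg)$ to $\demux_m$, which strips the tag and delivers $\msg$ on its own $i$th wire to $\E$. Symmetrically, a message that $\E$ sends on its $i$th wire is tagged by $\demux_m$, relayed to $\mux_m$, and delivered untagged on the $i$th backdoor tape of $r$. Only correctly tagged messages ever reach either relay, since each relay receives only from the other and from the appropriate tapes. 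So the ``arbitrary behaviour on malformed messages'' clause is never triggered.

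We then set up a step-by-step coupling of the two executions that uses the same random tapes for $\E$ and $r$. Under this coupling the sequence of messages delivered to each machine of $\E$ and of $r$, and hence its internal run, is identical in both executions. The relays only insert extra activations that pass control along, and they never halt without sending, so the activation order of the non-relay machines is preserved. Consequently $\exec$ outputs the same bit in both executions with probability $1$, and the two ensembles are identical. This in particular gives $\approx$.

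The one subtlety is the case where a non-main machine of $r$ halts without sending. In both worlds control then returns to the main machine of $\E$, because the relays are not involved in that step. We also need the relays to be efficient, so that the composite is still a legitimate network. Both points should be routine.

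\emph{Second claim.} Let $B_f\otimes B_r=A^m$ and take $\bar a:=\demux_m$ in \eqref{eq:splitmono}. Applying the first claim to the state $(f\otimes\id[B_r])\circ r \colon I\to Y\otimes A^m$ yields \eqref{eq:splitmono} with $a=\mux_m$, which is exactly the definition of being split mono.

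\emph{Main obstacle.} The step I expect to need the most care is making the coupling rigorous with respect to UC's activation rules. In particular, one must check that inserting relay hops cannot change who is activated after a halt, and that it does not interact with any running-time budgets of the efficiency notion. I would handle this by an induction on the number of activations of non-relay machines.
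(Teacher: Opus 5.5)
Your proposal is correct and follows the same route as the paper. The round trip $\demux_m\circ\mux_m$ is a transparent relay, so the executions with and without it coincide, and instantiating this at the state $(f\otimes\id[B_r])\circ r$ gives \eqref{eq:splitmono} with $a=\mux_m$ and $\bar a=\demux_m$; your coupling argument (shared random tapes, relays that never halt silently, control returning to the main machine of $\E$ after a silent halt) simply spells out the paper's two-line proof, and the running-time caveat you raise is also left to the paper's standing efficiency assumptions.
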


\begin{proof}
The round-trip $\demux_m\circ\mux_m$ sends every message out and back under a faithful tagging, hence reproduces it exactly. As a result, it is indistinguishable from $\id[A^m]$ on any state. This is exactly the
hypothesis~\eqref{eq:splitmono} of Corollary~\ref{cor:splitmonocompleteness} with $a=\mux_m$ and $\bar a=\demux_m$.
\end{proof}

In particular, security may always be checked against an adversary whose interface to the environment is the single wire $A$ by Corollary~\ref{cor:splitmonocompleteness}.

\subsection{Relationship to and differences with UC}\label{ssec:translatingtoUC}

Now that we have the subcategories $\DR\subseteq  \DB$ and a composition-respecting equivalence relation, the theory in Section~\ref{sec:UCcategorical} readily applies. In particular, Theorems~\ref{thm:securemapsareanSMC} and~\ref{thm:categoricalUCcomposition} give rise to composition theorems for this specific model. We will now discuss how this model relates to UC. 

Our setting differs in some technical details from that of UC, despite being in essence the same theory. First of all, for us, an ITM can have  multiple communication tapes. In contrast, in UC every machine has only one input tape and one subroutine-output tape (and one backdoor tape).

This, in turn, gives rise to a second difference as to how the connectivity of a network of interactive Turing machines is captured. In our setting, we explicitly add data, keeping track of which tapes are paired together. In contrast, in UC each machine is modeled as a triple $\mu=(\ucid_\mu,\uccom_\mu,\tilde{\mu})$, where $\ucid_\mu$ is the \emph{identity of a machine}, \ie a string giving the machine its name, $\uccom_\mu$ is its \emph{communication set}, \ie a set of pairs of the form $(\ucid, \tp)$, where $\ucid$ is the name of a machine and $\tp \in \{\uci,\ucsro,\ucbd \}$, and finally $\tilde{\mu}$ is the actual program (code) of the machine. 

For a set $\pi$ of such machines  to form \emph{a protocol}, the following conditions must be satisfied.
\begin{itemize}
    \item The identities of the machines in $\pi$ are distinct from each other and from $\{0,1\}$ (as $0$ and $1$ are reserved for the environment and the adversary, respectively).
    \item Backdoor communication is not used,\footnote{Formally, there are no such restrictions on protocols in~\cite[Section 2]{Can20}.  However, we believe this is an oversight. Indeed, Canetti's recent tutorial~\cite{Can25:tutorial} requires this explicitly.} \ie every entry $(\ucid, \tp)$ appearing in the communication set of some machine in $\pi$ has $ \tp \in\{\uci,\ucsro\}$.
    \item If $(\ucid,\uci)$ is in the communication set of some machine $\mu \in \pi$ with identity $\ucid_\mu$, then there is a machine in $\pi$ whose identity is $\ucid$ and whose communication set contains $(\ucid_\mu,\ucsro)$. 
    \item If $(\ucid,\ucsro)$ is in the communication set of some machine $\mu$ in $\pi$ \emph{and} $\ucid$ is the identity of some machine in $\pi$, then that machine's communication set contains $(\ucid_\mu,\uci)$.
\end{itemize}
 In the last condition above, if $\ucid$ is not the identity of any machine in $\pi$, we say that $\mu$ is a \emph{main machine} of $\pi$ and that $\ucid$ is an \emph{external identity} of $\pi$.

Another difference is that in UC, the adversary is always a single machine, whereas for us the adversary is allowed to be a network of machines. This difference is not essential, as by our standing assumptions on efficiency,  we can replace an efficient network of adversaries with an efficient single adversary simulating the network. We argue that the other differences do not matter either, by giving a formal translation between our setting and UC. 

We start by noting that the possible kinds of connectivity do  slightly differ between the two approaches. The main difference is that our formalism allows a machine to be connected by multiple wires to itself, to another machine, or to the environment:
\[
\begin{pic}
\node[box=0/3/0/1, minimum height=10mm] (a) at (0,0) {f};
\node[box=0/1/0/3, minimum height=10mm] (b) at (1.5,.35) {g};
\draw (a.west.1) to ++(-.6, 0) node[above] {};
\draw (a.east.1) to (b.west.2);
\draw (a.east.2) to (b.west.3);
\draw (a.east.3) to ++(2.1, 0) node[below] {};
\draw (b.east.1) to ++(.6, 0) node[below] {};
\draw (b.west.1) to ++(-2.1, 0) node[above] {};
\end{pic} 
\]
We first focus on such connections internal to a resource.

\begin{definition}
Call a resource $r$ in $\DB$ simple if no machine internal to $r$ is connected by more than one wire of the same type to a machine in $r$. 
\end{definition}

We are now in a position to translate from our formalism to UC and back. However, as our machines do not have names (identities) but the ones in UC do, such a naming must be provided.  Conversely, when translating a UC-protocol into a state, we need to specify an ordering (so that we know how to order the outgoing wires in a state). The translation theorem is then cleanest to state in terms of functions operating on resources and protocols equipped with the required additional data, which we next define. 
 
\begin{definition}\label{def:translationdata} Fix a (countably infinite) set  $\names$ of possible identities of machines. Given a representative $(A^m, r \colon I\to A^n \otimes A^m)$ of a resource on  $A^n$ in $\DB$, we say that a function $\sigma \colon \{1, \dots , n\}\to \names$ is \emph{boundary-compatible} with $r$ if no machine of $r$ has two  external wires both mapped to the same identity under $\sigma$. A \emph{valid naming} for $(A^m,r)$ consists of 
\begin{itemize}
    \item a function  $\sigma \colon \{1, \dots , n\}\to \names $  that is boundary-compatible with $r$, and 
    \item a function $\tau \colon \{1,\dots ,m\}\to\names$ that is injective
\end{itemize}
such that the images of $\sigma$ and $\tau$ are disjoint, the image of $\sigma$ does not contain $1$ and the image of $\tau$ does not contain $0,1$. A \emph{named resource} is a representative of a resource together with a valid naming for it. Given a named resource $\tilde{r}$, we will write $\left\vert \tilde{r}\right\vert$ for its underlying state \ie if $\tilde{r} = (r, \sigma, \tau)$, then $\left\vert \tilde{r}\right\vert= r$.

Given a UC-protocol $\pi$, let $\conn(\pi)$ be the set of 
pairs $(\ucid_\mu,\ucid)$ where $\mu$ is a main machine of $\pi$ and $\ucid$ is an external identity of $\pi$ appearing in the communication set of $\mu$, and let $\machines (\pi)$ be the set of machines in $\pi$. An \emph{ordering of the interfaces of $\pi$} consists of total orders on $\conn(\pi)$ and $\machines(\pi)$. An \emph{ordered UC protocol} consists of an UC protocol together with an ordering of its interfaces. Given an ordered UC-protocol $\tilde{\pi}$, we will write $\left\vert\tilde{\pi}\right\vert$ for its underlying protocol. 
\end{definition}

\begin{theorem}[Translation]\label{thm:translation}
There is a translation function $F$ sending any named simple resource  to an ordered UC protocol, and an inverse translation $G$ sending any ordered UC protocol to a named simple resource. The functions $F$ and $G$ satisfy the following:

\begin{enumerate}

\item The translation is well-defined: if $r' = (\id[A^n] \otimes c) \circ r$  for a permutation $c \colon A^m \to A^m$ (witnessing $[A^m,r] = [A^m,r']$  in $\DB$), then $\left\vert F(r', \sigma,\tau\circ c^{-1})\right\vert=\left\vert F(r, \sigma,\tau)\right\vert$.

\item  Round-trips are indistinguishable: $\left\vert G(F(\tilde{r}))\right\vert \approx \left\vert\tilde{r}\right\vert$ as states in  $\CC$, and $\left\vert F(G(\tilde{\pi}))\right\vert\approx \left\vert\tilde{\pi}\right\vert$ as UC protocols.

\item The translation extends to adversaries and environments as follows. A $\CC$-morphism  $a \colon A^m \to A$ (a candidate adversary acting on the backdoor of $r$) translates to a UC adversary $F(a)$ for $F(\tilde{r})$. In particular, $\mux_m$ from Lemma~\ref{lem:muxcompleteness} translates to the UC-dummy adversary and vice versa.  An environment 
$\E \colon A^n \otimes A \to A$ in $\CC$ translates to a UC environment $F(\E)$. Conversely, a UC adversary machine $\mathcal{A}$ attached to a UC hybrid protocol $\pi$ 
translates to a $\CC$-morphism $A^m \to A$ acting on the backdoor of $\left\vert G(\tilde{\pi})\right\vert$; a UC environment compatible with $\left\vert\tilde{\pi}\right\vert$
translates to a $\CC$-morphism $G(\E) \colon A^n \otimes A \to A$. With these translations, probability 
ensembles agree in both directions:
\begin{align*}
    &\exec_{(\id[A^n] \otimes a) \circ r, \E} = \ucexec_{\left\vert F(\tilde{r})\right\vert, F(a), F(\E)} \\
    &\ucexec_{\left\vert\tilde{\pi}\right\vert, \mathcal{A}, \E} = \exec_{(\id[A^n] \otimes G(\mathcal{A})) \circ \left\vert G(\tilde{\pi})\right\vert, G(\E)}~.
\end{align*}
\item $F$ and $G$ preserve computational indistinguishability.

\item Let $r\colon I\to X$ and $s\colon I\to Y$ be resources and $g\colon X\to Y$ a map in $\DR$, such that  $g\circ r$ and $s$ are simple. Assume $(\sigma,\tau)$ and $(\sigma,\tau')$ give valid namings for $g\circ r$ and $s$, so that $F(g \circ  r,\sigma,\tau)$ and $F(s,\sigma,\tau')$ are compatible in that $\conn(F(g \circ  r,\sigma,\tau))=\conn (F(s,\sigma,\tau'))$. Then $g$ is a secure emulation $r \to s$ if and only if  $F(g \circ  r,\sigma,\tau)$ UC-emulates $F(s,\sigma,\tau')$.
\end{enumerate}
\end{theorem}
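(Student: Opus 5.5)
The plan is to define $F$ and $G$ explicitly at the level of representatives (string diagrams built from ITMs, cups and caps), and then verify items 1--5 in order. Items 2--4 reduce to exhibiting, for each execution, a step-by-step bijection between the runs of the two models. Item 5 then follows from these by combining completeness of the multiplexing adversary with the translation of adversaries.

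\emph{Construction of $F$ and $G$.} Given a named simple resource $(A^m, r, \sigma, \tau)$, first normalise $r$. Using the snake equations and the SMC axioms, $r$ can be written as a tensor of its constituent ITMs followed by a wiring (a permutation composed with cups and caps). Each machine carries exactly one backdoor tape, so $\tau$ names the $m$ machines; $\sigma$ names the external endpoints. For the $j$th machine, $F$ produces the UC machine $(\tau(j), \uccom, \tilde\mu)$.
\begin{itemize}
\item The communication set lists $(\tau(k), \uci)$ or $(\tau(k), \ucsro)$ for every wire to machine $k$, and $(\sigma(i), \ucsro)$ for every external wire $i$. Simplicity and boundary-compatibility make these pairs distinct.
\item The code $\tilde\mu$ runs the original ITM, translating between its numbered tapes and sender identities. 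Because no two wires of the same type join the same pair of machines, and no machine has two external wires with the same name, the tape a message arrives on is recoverable from the sender identity.
\item The ordering on $\machines$ comes from $\tau$, and the ordering on $\conn$ from the order of the wires $1,\dots,n$.
\end{itemize}
$G$ reverses this construction: it creates one generator per machine, with tapes enumerated by the sorted communication set, and realises each matched input/subroutine-output pair by a composite using a cup or cap where necessary. Item 1 is then immediate, since permuting backdoors by $c$ while reindexing $\tau$ by $c^{-1}$ yields literally the same set of named machines.

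\emph{Items 2--4.} For the round trips, the composite of the two tape translations is a relabelling wrapper around each original ITM. Its execution steps correspond one-to-one with those of the original, so the execution distributions coincide exactly; in particular they are indistinguishable.

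For item 3, translate $a \colon A^m \to A$ into a single UC adversary that internally simulates the network $a$, using the standing efficiency assumption. Its UC backdoor messages, tagged with machine identities $\tau(j)$, are routed to and from the $j$th backdoor wire. An environment network $\E$ is translated into a single machine with identity $0$ in the same way. We then prove the equality of ensembles by induction on activations. UC's single-threaded rule (activation passes to the recipient, and control returns to the environment on a halt) is exactly the execution model of Section~\ref{ssec:indistinguishability}. The only extra check is that halting inside the simulated adversary or environment network returns control correctly. For $\mux_m$, $F$ yields precisely the UC dummy adversary: it forwards tagged messages, and the tag $i$ corresponds to the identity $\tau(i)$.

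Item 4 follows from item 3. Any distinguishing environment on one side translates to one on the other side with the same ensembles, and $G$ is surjective on environments up to this equality.

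\emph{Item 5, the main obstacle.} In the forward direction, given $g$ secure, take the UC adversary $\mathcal A$ and translate it via $G$ to a $\CC$-adversary $a$. Security yields a simulator $b$, and $F(b)$ is the UC simulator; the ensembles agree by item 3. We must also check that $F(g\circ r,\sigma,\tau)$ and $F(s,\sigma,\tau')$ have matching external interfaces, which is exactly the hypothesis $\conn(F(g\circ r,\sigma,\tau)) = \conn(F(s,\sigma,\tau'))$.

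In the converse direction, UC emulation gives a simulator only for the UC dummy adversary, which is $F(\mux_m)$. Translating it back gives $b$ satisfying~\eqref{eq:UCsecurity} for $a = \mux_m$. Lemma~\ref{lem:muxcompleteness} and Corollary~\ref{cor:splitmonocompleteness} then yield security against all $a$.

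I expect the delicate points to be the following:
\begin{itemize}
\item the backdoor object of $g\circ r$ is $B_g\otimes B_r$, and the combined naming $\tau$ must be injective across both;
\item the naming-independence of the UC notions must hold;
\item the UC dummy adversary's message format must match $\mux_m$'s tagging exactly, including its behaviour on ill-formed messages.
\end{itemize}
I would handle these by fixing $\mux_m$'s default behaviour on ill-formed messages to coincide with the UC dummy adversary's.
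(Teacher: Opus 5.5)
Your proposal is correct and follows essentially the same route as the paper. You define $F$ and $G$ by naming machines via $\tau$ and boundary wires via $\sigma$, with tag-translating code, and collapse the adversary and environment networks into single machines for item 3. You prove item 5 via $G(\mathcal{A})$ in one direction and, in the other, via $\mux_m$ as the dummy adversary together with Lemma~\ref{lem:muxcompleteness} and Corollary~\ref{cor:splitmonocompleteness}. The only minor difference is item 4: you transfer distinguishing environments directly in both directions, whereas the paper argues that the translations reflect distinguishability by 3) and then applies a general lemma, using the round-trips of 2), about maps that are injective on equivalence classes.
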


\begin{proof}
We begin by defining $F(r,\sigma,\tau)$. Let $f_i$ be the machine connected to the $i$th adversarial tape of $r$. Equip $f_i$ with the identity $\tau(i)$, and read off the communication sets from the connectivity of $r$. As $r$ is simple, this is easy: use the names provided above and read the type of connection in $\{\uci,\ucsro \}$ from the object in $\{A_\rightarrow,A_\leftarrow\}$ corresponding to the wire.  We now deal with wires that are not connected to anything: for the $j$th such wire, add $(\sigma(j),\ucsro)$ to the communication set. We also adjust the code of each machine slightly: when a machine gets a message $(\ucid\in \names,\msg)$,  it behaves as the original machine did when getting the message $\msg$ on the wire connected to $\ucid$. The ordering of interfaces of the resulting protocol is inherited from the order on the wires of $r$. 

For $G$, we can read off the connections from the communication sets.  We modify the machines as follows: when a machine gets a message $\msg$ on a wire connected to a machine with identity $\ucid\in \names$, it behaves as the original machine did when receiving message $(\ucid,\msg)$. Moreover, for each machine, we add one more tape connected to the adversarial port: when receiving a message there, the machine will behave as it did when receiving messages via its backdoor tape. 

The ordering of the interfaces of $\pi$ fixes the orders of the honest and backdoor wires. Once those have been fixed, the naming functions $\sigma$ and $\tau$ are inherited from the names of the external identities and the machines of $\pi$, and they will result in a valid naming because $\pi$ is a well-formed protocol.

1) and 2) are now clear from the construction.\footnote{In fact, the round-trips produce networks that are perfectly indistinguishable, but we have not defined that notion here.}

3) Let us first discuss the adversaries. For $F$, we proceed as above, except we first replace the network of adversarial machines by a single (equivalent) machine which we name $1$, and add $(1,\ucbd)$ to the communication sets of all other machines. For $G$, we replace all backdoor connections with a wire of type $A$, and otherwise proceed as above.  Any environment can be translated along $F$ and $G$ similarly, 
and by construction, the resulting probability ensembles are equal.

4) It follows from 3) that if $r$ and $s$ are computationally distinguishable, so are $F(r,\sigma,\tau)$ and $F(s,\sigma,\tau)$ whenever $\sigma,\tau$ give valid namings for both $r$ and $s$, and a similar claim holds for $G$. Hence $F$ and $G$ are injective on indistinguishability classes. That $F$ and $G$ also preserve indistinguishability follows from the following general fact: if $(X,\approx_X)$ and $(Y,\approx_Y)$ are two sets equipped with equivalence relations, and $F \colon X\to Y$ and $G \colon Y\to X$ are functions that are injective on equivalence classes satisfying $G(F(x))\approx_X x$ and $F(G(y))\approx_Y y$ for all $x\in X, y \in Y$, then $F$ and $G$ also preserve the equivalence relations, \ie $x \approx_X y \implies F(x) \approx_Y F(y)$ and similarly for $G$. 

5) We now prove that the security notions are equivalent. Assume first that $g \colon r\to s$ is a secure emulation in $\DR$. Let $\mathcal{A}$ be a UC adversary for $F(g\circ r,\sigma,\tau)$. As $g \colon r\to s$ is secure, for $G(\mathcal{A}) \colon A^m\to A$ there is a simulator such that~\eqref{eq:UCsecurity} holds and hence $F(g \circ  r,\sigma,\tau)$ UC-emulates $F(s,\sigma,\tau')$. Conversely, if $F(g \circ  r,\sigma,\tau)$ UC-emulates $F(s,\sigma,\tau')$, then for the UC dummy adversary $\mathcal{A} $ there is a corresponding simulator $\mathcal{S}$. As $\mathcal{A}$ translates to $\mux_m$, this implies that  $g \colon r\to s$ is a secure emulation by Lemma~\ref{lem:muxcompleteness} and Corollary~\ref{cor:splitmonocompleteness}. 
\end{proof}

Using Theorem~\ref{thm:translation}  we see that Corollary~\ref{cor:splitmonocompleteness} when applied to $\mux_m$ from Lemma~\ref{lem:muxcompleteness} translates to the ``completeness of the dummy adversary'' in UC. One can also use Theorem~\ref{thm:translation} to deduce that any UC-network is indistinguishable from a naming-invariant one---going back and forth via the translation will hard-code the names, so that renaming machines no longer affects their behavior. 

We now discuss the composition theorem of UC for static systems~\cite[Theorem~3]{Can20}.
Let $\pi, \phi, \rho$ be UC protocols. Recall:
\begin{itemize}
    \item $\phi$ is a \emph{subroutine} of $\rho$ if $\phi \subseteq \rho$ as a set of ITMs.
    \item $\pi$ is \emph{compatible} with $\phi$ if their main machines have the same set of identities (\ie there is an identity-preserving \emph{bijective} correspondence between the main machines of $\pi$ and $\phi$), and for each identity in this set, external identities in the communication set of the corresponding machines are equal. Equivalently, $\pi$ is compatible with $\phi$ iff $\conn(\pi)=\conn(\phi)$.
    \item For $\phi$ a subroutine of $\rho$, $\pi$ is \emph{identity-compatible} with $\rho$ and $\phi$ if no machine in $\pi$ has the same identity as one in $\rho \setminus \phi$ or one of the external identities of $\rho$.
\end{itemize}
The first two notions above have natural categorical analogues.  If $\phi$ is a subroutine of $\rho$, then translating them along Theorem~\ref{thm:translation} results in two resources $r_\phi$ and $r_\rho$ such that $r_\rho$ \emph{factorizes via $r_\phi$}, \ie there exists a map $f$ such that $r_\rho=f\circ r_\phi$. Indeed, this $f$ can be built from $\rho\setminus\phi$ by mimicking how UC protocols are translated. 
Two compatible protocols are translated by Theorem~\ref{thm:translation} into two resources \emph{on the same object} (equipped with suitably compatible namings). The notion of identity-compatibility seems to be more specific to UC. Namely, if $\pi$ and $\phi$ are compatible, then their translates $r_\phi$ and $r_\pi$ are resources on the same object, say $X$. If furthermore $\phi$ is a subroutine of $\rho$, we get the factorization $r_\rho=f\circ r_\phi$. We can then readily compose $f$ with $r_\pi$ obtaining a state $f\circ r_\pi$: no identity-compatibility needed. However, for $(\rho\setminus\phi)\cup\pi$ to be a protocol (and hence for the composition to happen prior to translation), we furthermore need to ensure that there are no clashing identities in $\pi$ and $\rho\setminus\phi$. We now make this last claim precise.

\begin{claim} 
\label{claim:prot}
Let $\pi,\phi,\rho$ be UC protocols such that $\phi$ is  a subroutine of $\rho$, $\pi$ is compatible with $\phi$, and $\pi$ is identity-compatible with $\rho$ and $\phi$. Then $\rho^{\phi\to\pi}:=(\rho\setminus\phi)\cup\pi$ is a UC protocol.
\end{claim}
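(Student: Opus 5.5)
We verify the four protocol conditions for $\rho^{\phi\to\pi}=(\rho\setminus\phi)\cup\pi$ in turn, splitting each machine $\mu$ of $\rho^{\phi\to\pi}$ according to whether $\mu\in\rho\setminus\phi$ or $\mu\in\pi$. The conditions concerning a single machine are immediate. The identities of machines in $\pi$ avoid $\{0,1\}$ because $\pi$ is a protocol, and the same holds for $\rho\setminus\phi\subseteq\rho$. Backdoor communication is absent because every communication set is copied unchanged from either $\pi$ or $\rho$. For distinctness of identities, each of $\pi$ and $\rho\setminus\phi$ is internally distinct, and identity-compatibility says no machine of $\pi$ shares an identity with a machine of $\rho\setminus\phi$. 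In particular every identity names at most one machine of $\rho^{\phi\to\pi}$.

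The substance lies in the two closure conditions. Here we use the following consequence of compatibility, which we prove first: an identity $\ucid$ is the identity of a main machine of $\phi$ together with an external identity $\ucid'$ in its communication set if and only if the same holds in $\pi$. That is, $\conn(\pi)=\conn(\phi)$.

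Consider first an input entry $(\ucid,\uci)\in\uccom_\mu$. There are four cases.
\begin{itemize}
\item If $\mu\in\rho\setminus\phi$, then since $\rho$ is a protocol there is $\nu\in\rho$ with identity $\ucid$ and $(\ucid_\mu,\ucsro)\in\uccom_\nu$. If $\nu\in\rho\setminus\phi$ we are done.
\item If instead $\nu\in\phi$, then $\ucid_\mu$ is not the identity of a machine of $\phi$: identities in $\rho$ are distinct and $\mu\notin\phi$. Hence $\nu$ is a main machine of $\phi$ with external identity $\ucid_\mu$, so $(\ucid,\ucid_\mu)\in\conn(\phi)=\conn(\pi)$. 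This gives a machine of $\pi$ with identity $\ucid$ whose communication set contains $(\ucid_\mu,\ucsro)$.
\item If $\mu\in\pi$ and $\ucid$ names a machine of $\pi$, we use that $\pi$ is a protocol.
\item If $\mu\in\pi$ and $\ucid$ is external to $\pi$, we must rule this case out. Since $\pi$ is a protocol, every $(\ucid,\uci)$ entry of a machine of $\pi$ has a witness in $\pi$, so this case cannot occur.
\end{itemize}

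Now consider a subroutine-output entry $(\ucid,\ucsro)\in\uccom_\mu$ where $\ucid$ names a machine $\nu$ of $\rho^{\phi\to\pi}$. There are three cases.
\begin{itemize}
\item If $\mu,\nu$ lie on the same side, the condition is inherited from $\pi$ or from $\rho$.
\item If $\mu\in\pi$ and $\nu\in\rho\setminus\phi$, we argue that $\ucid$ is external to $\pi$, since identities are disjoint. So $(\ucid_\mu,\ucid)\in\conn(\pi)=\conn(\phi)$. Thus a main machine $\mu'$ of $\phi$ with identity $\ucid_\mu$ has $(\ucid,\ucsro)$ in its communication set, and $\ucid$ names $\nu\in\rho$. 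The protocol condition for $\rho$ then yields $(\ucid_\mu,\uci)\in\uccom_\nu$.
\item If $\mu\in\rho\setminus\phi$ and $\nu\in\pi$, this is the delicate case. The identity $\ucid$ of $\nu$ need not name any machine of $\rho$. We must show $(\ucid_\mu,\uci)\in\uccom_\nu$.
\end{itemize}

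I expect this last case to be the main obstacle, and possibly a point where the stated hypotheses are insufficient. Identity-compatibility only forbids clashes with $\rho\setminus\phi$ and with the external identities of $\rho$; the latter clause is exactly what is needed. My plan is as follows. If $\ucid$ is not an identity in $\rho$, then $\mu$ is a main machine of $\rho$ and $\ucid$ is an external identity of $\rho$, which identity-compatibility forbids for a machine of $\pi$. So $\ucid$ names a machine $\nu'\in\phi$, because $\rho\setminus\phi$ is excluded by disjointness. The $\rho$-protocol condition then gives $(\ucid_\mu,\uci)\in\uccom_{\nu'}$ with $\ucid_\mu$ external to $\phi$. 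Next, we transfer this input entry to $\nu$. Compatibility as stated constrains the external identities only through $\conn$, which records the main machines' output connections. So here I would try to read compatibility as also matching input connections, or else flag that the claim requires the corrected condition. This is in line with the paper's remark that these conditions need fixing.
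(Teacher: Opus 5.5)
There is a genuine gap, and it is in your final sub-case. Everything before it is correct and follows the paper's case split closely. Your input case with the witness $\nu\in\phi$, where you check that $\ucid_\mu$ is external to $\phi$ before invoking $\conn(\phi)=\conn(\pi)$, is even slightly more careful than the paper's version.

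The case in question is $\mu\in\rho\setminus\phi$ with $(\ucid,\ucsro)\in\uccom_\mu$, where $\ucid$ is the identity of some $\nu\in\pi$. You correctly reduce it to a machine $\nu'\in\phi$ with identity $\ucid$ and $(\ucid_\mu,\uci)\in\uccom_{\nu'}$. Here $\ucid_\mu$ is the identity of no machine of $\phi$, since identities in $\rho$ are distinct and $\mu\notin\phi$. You then try to ``transfer'' this input entry to $\nu$, and suggest the hypotheses may be insufficient without a strengthened compatibility condition.

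No strengthening is needed, and the claim holds as stated: the configuration you reached is impossible. Since $\phi$ is itself a UC protocol, its input condition applied to $(\ucid_\mu,\uci)\in\uccom_{\nu'}$ requires a machine of $\phi$ with identity $\ucid_\mu$. There is none, so this case can never arise. This is exactly how the paper closes it.

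You already used the same principle for $\pi$ in your fourth input bullet: every $(\ucid,\uci)$ entry of a protocol's machine names a machine of that protocol. The missing step is simply to apply it to $\phi$. As written, your proposal leaves the claim unproved and wrongly suggests it might fail under the given hypotheses.
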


\begin{proof}
Since $\pi,\phi$, and $\rho$ are UC protocols and $\pi$ is identity-compatible with $\rho$ and $\phi$, the identities of all machines in $\rho^{\phi\to\pi}$ are distinct from each other and from $\{0,1\}$. Moreover, $\rho^{\phi\to\pi}$ does not use backdoor communication as $\rho,\pi$ (and $\phi$) do not.

Assume now that $(\ucid,\uci)$ is in the communication set of some machine $\mu \in \rho^{\phi\to\pi}$ with identity $\ucid_\mu$. We need to prove that  there is a machine in $\rho^{\phi\to\pi}$ whose identity is $\ucid$ and whose communication set contains $(\ucid_\mu,\ucsro)$. We split into cases using the decomposition  $ \rho^{\phi\to\pi}=(\rho\setminus\phi)\cup\pi$. If $\mu\in\pi$, then the required machine exists in $\pi\subseteq \rho^{\phi\to\pi}$ since $\pi$ is a UC protocol.  If $\mu\in(\rho\setminus\phi)$, then since $\rho$ is a UC protocol, there is a machine in $\rho$  with identity $\ucid$ and whose communication set contains $(\ucid_\mu,\ucsro)$. Either this machine exists in $\rho\setminus\phi$, or it exists in $\phi$. In the first case, the machine in question exists in $\rho^{\phi\to\pi}$. In the second case, the machine in question is a main machine of $\phi$, so by compatibility of $\phi$ and $\pi$, the bijection from main machines of $\phi$ to those of $\pi$ gives a machine with the same identity in $\pi$ whose communication set contains $(\ucid_\mu,\ucsro)$. In any case, the required machine exists in $\rho^{\phi\to\pi}$.

Assume now that  $(\ucid,\ucsro)$ is in the communication set of some machine $\mu$ in $\rho^{\phi\to\pi}$ \emph{and} $\ucid$ is the identity of some machine $\mu'$ in $\rho^{\phi\to\pi}$. We need to prove that the communication set of $\mu'$ contains $(\ucid_\mu,\uci)$.  We now split into four cases using the decomposition $\rho^{\phi\to\pi}=(\rho\setminus\phi)\cup\pi$:
\begin{itemize}
    \item $\mu\in \pi$, $\mu' \in \pi$: follows since $\pi$ is a UC protocol.
    \item $\mu\in (\rho\setminus\phi)$, $\mu' \in (\rho\setminus\phi)$: follows since $\rho$ is a UC protocol.
    \item $\mu\in \pi$, $\mu' \in (\rho\setminus\phi)$: let $\tilde{\mu}\in\phi$ correspond to $\mu$ under the bijection from  main machines of $\pi$ to those of $\phi$: then compatiblity of $\pi$ and $\phi$ implies that $(\ucid,\ucsro)$ is in the communication set of $\tilde{\mu}$. As $\phi\subseteq \rho$ and $\rho$ is a UC protocol, we conclude that the communication set of $\mu'$ contains $(\ucid_\mu,\uci)$.
    \item $\mu\in (\rho\setminus\phi)$, $\mu' \in \pi$ :
    by identity-compatibility $\ucid$ is not an external identity of $\rho$. Therefore there is a machine $\mu''\in \phi$  with identity $\ucid$ and whose communication set contains $(\ucid_\mu,\uci)$, but this contradicts $\phi$ being a protocol. Therefore, this case can never arise.\qedhere
\end{itemize}
Note that the proof uses the bijection between the main machines of $\pi$ and $\phi$ in both directions. 
\end{proof}

A similar claim is made in~\cite[Section~2]{Can20} but we do not believe it is correct as stated there for the following reasons:
defining compatibility there only requires an ``identity-preserving \emph{injective} correspondence between the main machines of $\pi$ and those of $\phi$.'' Assuming this means an injection from the main machines of $\pi$ to those of $\phi$, this is not enough for the claim, since $\rho\setminus\phi$ might want to call a main machine of $\phi$ that is missing from $\pi$. 
If instead this was intended to mean an identity-preserving injection from the main machines of $\phi$ to those of $\pi$, the end result could still fail to be a UC protocol, for one of the new main machines in $\pi$ might want to send subroutine outputs to a machine in $\rho\setminus\phi$, in which case $(\rho\setminus\phi) \cup \pi$ fails to be a well-formed UC protocol. We note that the tutorial~\cite{Can25:tutorial} likewise defines compatibility via an identity- and external-identity-preserving \emph{bijection}, matching our corrected definition.

Similarly,~\cite[Section~2]{Can20} only requires an \emph{inclusion} of external identities. However, this is not enough for the claim, since a machine in $\rho\setminus\phi$ might want to call a main machine in $\phi$ that is missing from the communication set of the corresponding machine in $\pi$. 
\begin{corollary}[\!\!{\cite[Theorem~3]{Can20}}]\label{cor:UCcomposition}
    Assume that $\pi,\phi,\rho$ are UC protocols, $\phi$ is a subroutine of $\rho$,  and $\pi$ is identity-compatible with $\rho$ and $\phi$. If $\pi$ UC-emulates $\phi$, then UC protocol $\rho^{\phi\to\pi}$ UC-emulates $\rho$.
\end{corollary}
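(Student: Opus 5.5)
The plan is to reduce Corollary~\ref{cor:UCcomposition} to Theorem~\ref{thm:categoricalUCcomposition} by going through the translation of Theorem~\ref{thm:translation}. Note that UC-emulation already forces $\conn(\pi)=\conn(\phi)$, so $\pi$ is compatible with $\phi$. Since $\pi$ is also identity-compatible with $\rho$ and $\phi$, Claim~\ref{claim:prot} shows that $\rho^{\phi\to\pi}$ is a UC protocol. The statement is therefore meaningful.

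\textbf{Step 1: translate the protocols.} Choose orderings of the interfaces of $\pi$ and $\phi$ that agree on $\conn(\pi)=\conn(\phi)$. Applying $G$ gives named simple resources $r_\pi=|G(\tilde\pi)|$ and $r_\phi=|G(\tilde\phi)|$ on the same object $X=A^n$, with the same boundary naming $\sigma$. As noted after Theorem~\ref{thm:translation}, $\rho\setminus\phi$ translates, in the same way as $G$, to a morphism $f\colon X\to Y$. This $f$ consists of single machines, each with one backdoor tape, and each behaving as prescribed on the backdoor because UC machines do. Hence $f$ lies in $\DR$ and $|G(\tilde\rho)|\approx f\circ r_\phi$. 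The same construction, using identity-compatibility so that names do not clash, gives $|G(\widetilde{\rho^{\phi\to\pi}})|\approx f\circ r_\pi$.

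\textbf{Step 2: transfer the hypothesis.} By part~5 of Theorem~\ref{thm:translation}, applied with $g=\id[X]$, $r=r_\pi$ and $s=r_\phi$, the fact that $\pi$ UC-emulates $\phi$ means that $\id[X]$ is a secure emulation $r_\pi\to r_\phi$. Round-trips are indistinguishable (part~2), and the resources $r_\pi$, $r_\phi$ come from $\pi$, $\phi$ under $G$. So no renaming issues arise, apart from adjusting $\tau$ by permutations, which part~1 handles.

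\textbf{Step 3: compose and translate back.} Theorem~\ref{thm:categoricalUCcomposition} gives that $f$ is a secure emulation $r_\pi\to f\circ r_\phi$. Applying part~5 of Theorem~\ref{thm:translation} again, now with $g=f$, $r=r_\pi$ and $s=f\circ r_\phi$, shows that $F(f\circ r_\pi)$ UC-emulates $F(f\circ r_\phi)$. By Step~1 and parts~2 and~4 (round-trips are indistinguishable and $F$ preserves $\approx$), these are indistinguishable from $\rho^{\phi\to\pi}$ and $\rho$ respectively. UC-emulation is invariant under replacing either side by an indistinguishable protocol with the same interfaces, because the environment absorbs the adversary or simulator. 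This concludes the argument.

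\textbf{Main obstacle.} The delicate point is Step~1. Part~5 of Theorem~\ref{thm:translation} requires $f\circ r_\pi$ and $f\circ r_\phi$ to be simple and to admit valid namings $(\sigma,\tau)$ and $(\sigma,\tau')$ with matching $\conn$. It also requires the translation of the combined protocol to be exactly $f$ composed with the translation of its subprotocol. I would verify this directly from the constructions of $F$ and $G$, as follows:
\begin{itemize}
\item simplicity holds because UC communication sets yield at most one wire per pair of machines and type;
\item the namings can be taken from the machine identities, with injectivity of $\tau$ following from identity-compatibility;
\item the equality $\conn(\rho^{\phi\to\pi})=\conn(\rho)$ uses the bijective compatibility between $\pi$ and $\phi$, in the same way as in the proof of Claim~\ref{claim:prot}.
\end{itemize}
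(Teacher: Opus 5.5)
Your proposal is correct and follows the paper's proof: translate $\pi,\phi,\rho$ along $G$, read off $f$ from $\rho\setminus\phi$ so that $f\circ r_\phi$ matches $r_\rho$, apply Theorem~\ref{thm:categoricalUCcomposition}, and translate back along $F$. You also make explicit two things the paper leaves implicit: the transfer of the hypothesis via part~5 with $g=\id[X]$, and the side conditions of part~5 (simplicity, valid namings, $\conn(\rho^{\phi\to\pi})=\conn(\rho)$).

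The one minor difference concerns the backdoor ordering. The paper places the machines of $\phi$ last in the order on $\rho$, which matches the backdoor order $B_f\otimes B_{r_\phi}$ of the composite, so that $f\circ r_\phi=r_\rho$ holds exactly. You leave that order unspecified, so your $|G(\tilde\rho)|\approx f\circ r_\phi$ need not hold literally as states in $\CC$. It only holds up to a backdoor permutation, which your appeal to part~1 handles.
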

\begin{proof}
    Note that $\pi$ UC-emulating $\phi$ implies their compatibility (recall that UC-emulation relies on the notion of indistinguishability, which presupposes compatibility for execution to be well-defined in both cases). Therefore $\rho^{\phi\to\pi}$ is a protocol by Claim~\ref{claim:prot}. Now order $\conn(\pi)=\conn(\phi)$ and the machines of $\pi$ and $\phi$ arbitrarily. Then extend the order on the machines of $\phi$ to those of $\rho$ so that the machines of $\phi$ come last. Let us then translate $\pi,\phi,\rho$ along Theorem~\ref{thm:translation}. Now the translates of $\pi$ and $\phi$ under $G$ give rise to states $r_\pi,r_\phi \colon I\to X$ and $r_\rho\colon I\to Y$ in $\DB$. Moreover, $\rho \setminus \phi$ gets translated to a morphism $f$ such that $f\circ r_\phi=r_\rho$. Therefore, Theorem~\ref{thm:categoricalUCcomposition} applies, so that $f\circ r_\pi$ emulates $f\circ r_\phi=r_\rho$, which translates back via $F$ to the fact that $\rho^{\phi\to\pi}$ UC-emulates $\rho$. 
\end{proof}

Applying the above blueprint to Theorem~\ref{thm:categoricalUCcompositionglobal} we recover the UCGS theorem of~\cite{BCH+:globalwithUC} for static systems. This, in turn, paves the way for composition in the presence of global subroutines in other UC-like settings, such as Quantum UC~\cite{Unr10}. 

In practice, compatibility can be achieved by having the ideal functionality $s$ (and its translated UC counterpart) ``block'' any mismatched ports. This is a natural choice since the environment is universally quantified over. Moreover, this makes any restrictions explicit in the functionality specification. An analogous observation also applies to Theorem~\ref{thm:categoricalUCcompositionglobal}, where it is once again natural to leave the global functionality unchanged and instead impose the blocking restriction on the functionalities that use it.

%% file: 5-conc.tex
\section{Conclusions and Outlook}\label{sec:conclusion}

In this work we have initiated a categorical study of UC for static systems, giving rigorous diagrammatic proofs of basic results such as the composition theorems. Our theory holds generally and gives a template for producing similar theories where ITMs are replaced by other computational systems. Moreover, categorical thinking fixed some prior oversights and resulted in several beneficial changes to UC (doing away with identities of machines altogether, allowing both the environment and the adversary to consist of multiple machines) while remaining inter-translatable with UC. 

Our work also opens up fertile grounds for future work:
\begin{itemize}
\item Techniques used in~\cite{BK23} can readily be used to develop other variants of the theory. For example, we can develop a version where honest machines are partitioned into $n$ sets ($n$ parties), with all communication across the partitions mediated by a shared functionality. We expect that the resulting categorical theory stands in a similar relationship to the simpler variant of UC in~\cite{DBLP:conf/crypto/CanettiCL15} as our theory here relates to UC for static systems. Similarly, we could readily reason about (abstract) distances up to $\varepsilon>0$, rather than up to indistinguishability. 

\item Can we model full UC similarly? Existing categorical tools are not well-suited for studying a dynamically changing network, making this a difficult question. 
  
\item As mentioned in the introduction, our theory is not an instance of~\cite{BK23} nor vice versa. This suggests  finding a unified categorical framework capturing both. However, we believe that in order to  avoid ``premature generalization'', one should first study categorically other approaches to composability such as
Abstract/Constructive Cryptography~\cite{MR11,Mau11} in general and random systems~\cite{M02} as a specific instance, quantum UC~\cite{Unr10}, reactive simulatability~\cite{PW01,BPW04,BPW07}, the IITM model~\cite{KTR20}, and programming-language-based approaches such as IPDL~\cite{DBLP:journals/pacmpl/GancherSFSM23}, ILC~\cite{LHM19}, state-separating proofs~\cite{BDF+18} and others~\cite{MT13,HS15}. 

\item The long-term goal is to investigate whether recasting various frameworks under a categorical umbrella would facilitate their comparison and pave the way for translating results across them. (Indeed, \cite[Theorem~4.10]{BK23} demonstrates an instance where symmetric monoidal functors preserve security.) This could be achieved via a web of security-preserving functors, thereby establishing when a protocol secure in one framework remains secure in others, and allowing for the derivation of a unified composition theorem.

\end{itemize}
\paragraph*{AI disclosure} This manuscript was proofread and edited with the assistance of large language models and subsequently revised by the authors, who are responsible for its content.

%% file: app-A.tex
\section{Composition Theorems, Abstractly}\label{appendix:UC-composition}

In this section we briefly sketch how our theory fits an abstract and general theory in the style of~\cite{BK22,BK23}. However, the theory we develop here is not an instance of~\cite{BK22,BK23}, nor does~\cite{BK22,BK23} (or even its motivating examples) fit the theory developed here. We will leave the question of finding a categorical foundation capturing both for future work. 

We will assume more categorical knowledge, and in particular the notion of a lax monoidal functor. Recall that a \emph{preorder} is a set equipped with a binary relation $\leq$ that is reflexive and transitive. The category $\cat{PreOrd}$ has preorders as its objects and its morphisms consist of monotone functions: functions $f$ satisfying $x\leq y \Rightarrow f(x)\leq f(y)$. The categorical product of preorders (which takes the cartesian product of the underlying sets) equips $\cat{PreOrd}$ with a symmetric monoidal structure. 

If $\CC$ is a category and $F\colon \CC\to  \cat{PreOrd}$ is a functor, then the \emph{Grothendieck construction of $F$}  is the category $\int F$ whose objects are pairs $(X,r)$, where $X$ is an object of $\CC$ and $r$ is an element of $F(X)$, and morphisms $(X,r)\to (Y,s)$ are given by morphisms $f\colon X\to Y$ in $\CC$ such that $F(f) r\leq s$.  It is known that a symmetric lax monoidal structure on the functor $F$ induces an SMC structure on $\int F$~\cite{moeller:monoidalgrothendieck}.

\begin{definition}\label{def:simulationpreorder} Let $r,s \colon I\to X$ be resources, \ie states in $\DB$. Define $r\leq s$ iff the identity on $X$ gives a secure emulation $r\to s$, \ie iff there exists a simulator $b$ such that 
   \[ \begin{pic}
    \node[state,minimum width=1cm] (r) at (0,0){$r$};
    \draw (r.north east) to ++(.6,0) node[above] {};
    \draw (r.south east) to ++(.6,0) node[below] {};
  \end{pic} \enspace \approx  \enspace \begin{pic}
    \node[state,minimum width=1cm] (r) at (0,0){$s$};
    \node[box=0/1/0/1, minimum height=5mm] (b) at (1.1,-.23) {b};
    \draw (r.north east) to ++(1.7,0) node[above] {};
    \draw (r.south east) to (b.west.1);
    \draw (b.east.1) to ++(.5, 0) node[above] {};
  \end{pic}\]
\end{definition}

Picking $b$ to be the identity shows that $\leq$ is reflexive, and by composing simulators we see that $\leq$ is transitive.  Thus, $\leq$ gives a \emph{preorder} on resources of type $X$. Moreover, the proof of Theorem~\ref{thm:categoricalUCcomposition} shows that if $r\leq s$, then $f\circ r\leq f\circ s$ for any $f \colon X\to Y$ in $\DB$. Therefore, the functor $\hom(I,-) \colon \DB\to\cat{Set}$ can be promoted to a functor $\hom(I,-) \colon \DB\to\cat{PreOrd}$ landing in the category of preordered sets. Moreover, the functor $\hom(I,-)$ has a canonical symmetric lax monoidal structure with respect to the monoidal structures on $\DB$ and $\cat{PreOrd}$. 

We therefore have a composite symmetric lax monoidal functor 
\[\DR\hookrightarrow\DB\xrightarrow{\hom(I,-)} \cat{PreOrd}\]%
which we will simply denote by $F$, so that $\int F$ is an SMC.

\begin{theorem} The SMC $\int F$ is isomorphic to the SMC of resources (states in $\DB$) and secure emulations (maps in $\DR$ satisfying Definition~\ref{def:security}) between them. 
\end{theorem}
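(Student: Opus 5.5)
The natural route is to show that the two SMCs have literally the same objects, the same morphisms, and the same structure, so that the identity-on-data assignment is the isomorphism.

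\emph{Objects.} An object of $\int F$ is a pair $(X,r)$ with $X$ an object of $\DR$ and $r\in F(X)=\hom_{\DB}(I,X)$, that is, a resource on $X$. This is exactly the data of an object of the SMC of Theorem~\ref{thm:securemapsareanSMC}, where every resource $r$ carries its type $X$. So the objects match.

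\emph{Morphisms.} A morphism $(X,r)\to(Y,s)$ in $\int F$ is a map $f\colon X\to Y$ of $\DR$ with $F(f)r\le s$, that is, $f\circ r\le s$ in the sense of Definition~\ref{def:simulationpreorder}. The key step is to show that this condition is equivalent to $f$ being a secure emulation $r\to s$.
\begin{itemize}
\item First compute $f\circ r$ in $\DB$. By the composition rule of $\CB$ it is $[B_f\otimes B_r,(f\otimes\id[B_r])\circ r]$, with the backdoors collected in that order; use well-definedness of security to fix this representative. Then $f\circ r\le s$ means there is a $b\colon B_s\to B_f\otimes B_r$ with $(f\otimes \id[B_r])\circ r\approx(\id[Y]\otimes b)\circ s$. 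This is precisely the dummy-adversary condition~\eqref{eq:UCsecuritywithdummy}.
\item Theorem~\ref{thm:categoricalcompletenessofdummy} then shows that $f\circ r\le s$ implies $f$ is secure.
\item Conversely, if $f$ is secure, instantiate Definition~\ref{def:security} with $B=B_f\otimes B_r$ and $a=\id$. This yields exactly~\eqref{eq:UCsecuritywithdummy}, hence $f\circ r\le s$.
\end{itemize}
So the hom-sets coincide as subsets of $\hom_{\DR}(X,Y)$. I should also check that a representative-level ordering of backdoors does not matter. It does not, since $\le$ is invariant under the equivalence $\sim$ (absorb the iso into $b$, exactly as in the proof of well-definedness).

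\emph{Structure.} Composition and identities in $\int F$ are those of $\DR$, and so are those of the secure-emulation SMC, whose structure is inherited from $\CB$ by the proof of Theorem~\ref{thm:securemapsareanSMC}. For the monoidal structure, recall the construction of~\cite{moeller:monoidalgrothendieck}. There $(X,r)\otimes(Y,s)=(X\otimes Y,\phi_{X,Y}(r,s))$, where $\phi$ is the laxator of $\hom(I,-)$, namely $(r,s)\mapsto r\otimes s$ in $\DB$. The unit is $(I,\phi_0(*))=(I,\id[I])$, and the tensor of morphisms and the symmetries are those of $\DR$. In the resource SMC, the tensor of resources is also $r\otimes s$ in $\DB$, the unit is the trivial resource, and the morphisms are tensored in $\DR$. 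Hence the identity assignment is a strict symmetric monoidal isomorphism.

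\emph{Main obstacle.} The delicate step is the bookkeeping in the morphism comparison. One must confirm that the backdoor of $f\circ r$ in $\DB$ is literally $B_f\otimes B_r$ in the order matching~\eqref{eq:UCsecuritywithdummy}, or else insert a swap, absorbed into $b$ using that isomorphisms act on representatives. One must also check that the hom-level preorder used in $\int F$ is the one of Definition~\ref{def:simulationpreorder} on $\DB$-states even though $f$ ranges only over $\DR$. The rest is unwinding definitions.
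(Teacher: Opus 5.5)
Your proposal is correct and follows the paper's own route. You unwind the objects and morphisms of $\int F$ and identify $f\circ r\le s$ with the dummy-adversary condition~\eqref{eq:UCsecuritywithdummy}, via Definition~\ref{def:simulationpreorder} and Theorem~\ref{thm:categoricalcompletenessofdummy}. You then check that the laxator of $\hom(I,-)$ produces $(X\otimes Y, r\otimes s)$. You also spell out two points the paper leaves implicit: the converse direction (take $a=\id$ in Definition~\ref{def:security}) and the backdoor-ordering bookkeeping for $f\circ r$.
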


\begin{proof}
The result follows by unwinding definitions: an object of $\int F$ consists of pairs $(X,r)$ where $X$ is an object of $\DR$ and $r\colon I\to X$ is a state in $\DB$, \ie a resource. Moreover, morphisms $(X,r)\to (Y,s)$ in $\int F$ are given by morphisms $f\colon X\to Y$ in $\DR$ such that $f\circ r\leq s$. By Definition~\ref{def:simulationpreorder} and Theorem~\ref{thm:categoricalcompletenessofdummy} these are exactly the secure emulations $r\to s$. The monoidal structure on $\int F$ is obtained from the lax monoidal structure on $F$, and it is straightforward but tedious to check that this sends the resources $(X,r)$ and $(Y,s)$ to the resources $(X\otimes Y,r\otimes s)$, and thus ultimately results in the same SMC structure. 
\end{proof}